\documentclass[a4paper,11pt]{article}

\usepackage{graphicx}

\usepackage{fullpage}
\usepackage{libertine}
\usepackage{color}

\usepackage[ruled,vlined]{algorithm2e}
\SetArgSty{textrm}

\usepackage{amsmath,amsfonts,amssymb,amsthm}

\usepackage[breaklinks=true]{hyperref}
\usepackage[svgnames]{xcolor}
\usepackage[capitalise,nameinlink]{cleveref}
\hypersetup{colorlinks={true},linkcolor={DarkBlue},citecolor=[named]{DarkGreen}}

\usepackage{natbib}

\usepackage{subcaption}
\usepackage{caption}
\newcommand{\alphaMech}{\text{\sc $\alpha$-Statistic}}

\newcommand{\UniformAlpha}{\text{\sc 
Uniform-Statistic}}

\usepackage{tikz}  
\usetikzlibrary{arrows}
\usetikzlibrary{patterns,snakes}
\usetikzlibrary{decorations.shapes}
\tikzstyle{overbrace text style}=[font=\tiny, above, pos=.5, yshift=5pt]
\tikzstyle{overbrace style}=[decorate,decoration={brace,raise=5pt,amplitude=3pt}]
\usetikzlibrary{shapes.geometric}

\newtheorem{theorem}{Theorem}[section]

\theoremstyle{definition}
\newtheorem*{comment*}{Comment}

\newcommand{\SW}{\text{SW}}

\newcommand{\bw}{\mathbf{w}}
\newcommand{\bo}{\mathbf{o}}
\newcommand{\by}{\mathbf{y}}
\newcommand{\bx}{\mathbf{x}}
\newcommand{\bp}{\mathbf{p}}

\setlength{\parskip}{0.3em}

\title{\bf Constrained Truthful Obnoxious Two-Facility Location with Optional Preferences}

\usepackage{authblk}
\author{Panagiotis Kanellopoulos and Alexandros A. Voudouris}
\date{School of Computer Science and Electronic Engineering \\ University of Essex, UK}


\begin{document}

\allowdisplaybreaks

\maketitle

\begin{abstract}
We consider a truthful facility location problem with agents that have private positions on the line of real numbers and known optional preferences over two obnoxious facilities that must be placed at locations chosen from a given set of candidate ones. Each agent wants to be as far away as possible from the facilities that affect her, and our goal is to design mechanisms that decide where to place the facilities so as to maximize the total happiness of the agents as well as provide the right incentives to them to truthfully report their positions. We consider separately the setting in which all agents are affected by both facilities (i.e., they have non-optional preferences) and the general optional setting. We show tight bounds on the approximation ratio of deterministic strategyproof mechanisms for both settings, and almost tight bounds for randomized mechanisms. 
\end{abstract}

\section{Introduction}
Due to their versatility in modeling many applications of interest, such as {\em voting} and {\em clustering}, {\em facility location problems} have been a topic of multidisciplinary research. In such problems, the objective is to decide the location of a number of facilities under various constraints and typically aiming to minimize the transportation cost of a set of agents. In the economics literature, such problems have mainly been studied under the lens of {\em mechanism design}, where the agents are assumed to be strategic, and the goal is to define strategyproof decision-making rules (mechanisms) that incentivize the agents to report their private positions {\em truthfully}~\citep{Moulin1980}. On the other hand, the operations research and computer science literature have traditionally focused on versions of the problem where the positions of the agents are assumed to be known and the goal is design (approximation) algorithms to minimize objective functions of the distances of the agents to the computed facility locations~\citep{approx-book}.

Motivated by these two lines of research, \citet{procaccia09approximate} initiated the study of {\em truthful facility location problems} in which the goal is to achieve strategyproofness and approximate efficiency {\em simultaneously}. Since their work, many different such problems have been studied in the more general area of {\em approximate mechanism design without money}, under different assumptions, such as about the number of facilities, whether the location space is constrained, or whether the facilities are considered desirable or {\em obnoxious}. We refer to the survey of \citet{fl-survey} for an exposition of relevant models. 

In this paper we focus on a discrete heterogeneous truthful facility location problem in which there are {\em two} obnoxious facilities to place at locations chosen from a given multiset of candidate ones. The agents are assumed to have private positions on the line of real numbers and known {\em optional preferences} over the two facilities, in the sense that some of them might want to be far from one of the two facilities (in which case those agents derive a {\em utility} that is equal to their distance from the facility that affects them), while some other agents might want to be far from both (and hence derive utility that is equal to their distance from both facilities). We want to design strategyproof mechanisms and approximately maximize the {\em social welfare}, that is, the total utility of the agents. To give a toy example of our setting, consider the case where the facilities correspond to a factory and a bar. Generally, both of these facilities can be considered obnoxious by the agents since the factory is not good health-wise and the bar can get quite noisy, but some of the agents might be indifferent if, say, the bar is located close to their houses or not. 

\subsection{Our Contribution}
We show tight or nearly tight bounds on the approximation ratio of {\em deterministic} and {\em randomized} strategyproof mechanisms for the truthful obnoxious two-facility location problem that we consider. We provide details about the model in Section~\ref{sec:model}. Our results are summarized in Table~\ref{tab:overview}.

\renewcommand{\arraystretch}{1.3}
\begin{table}[t]
    \centering
    \begin{tabular}{c|cc}
                      & Deterministic & Randomized \\ 
       \hline
       Non-optional   & $\sqrt{3}$ & $[1.5, 1.522]$ \\
       General      & $3$ & $[1.5,2]$ \\
       \hline
    \end{tabular}
    \caption{An overview of our bounds on the approximation ratio of deterministic and randomized mechanisms for instances with non-optional preferences (in which agents are affected by both facilities) and general instances (in which some agents might be affected by a single facility and some agents might be affected by both). When a single number is given, the bound is tight.}
    \label{tab:overview}
\end{table}

We first consider a simpler setting in which the agents have {\em non-optional preferences} in the sense that all agents are affected by both facilities. We show a tight bound of $\sqrt{3}$ on the approximation ratio of deterministic mechanisms. This resolves the gap between $3/2$ and $2$ left from the previous works of \cite{GaiLW22} and \citet{ZhaoLNF24} who focused on this particular setting. Our upper bound follows by a mechanism that is parameterized by an $\alpha \in [0,1/2]$ and bases its decision on whether the $\alpha n$-th and the $(1-\alpha)n$-th leftmost agents agree on which between the leftmost and the rightmost candidate locations is the most-distant one. Our lower bound is based on a sophisticated construction that involves a number of sequences of instances to argue about the behavior of any deterministic mechanism. See Section~\ref{sec:doubleton:deterministic} for details. For the class of randomized mechanisms, we show that the approximation ratio is between $3/2$ and $(5+4\sqrt{2})/7 \approx 1.522$. The upper bound here follows by running the deterministic parameterized mechanism for an $\alpha$ that is chosen uniformly at random; see Section~\ref{sec:doubleton:randomized}.

We next switch to the general setting, in which some agents might be affected by one of the facilities and some agents might be affected by both. For deterministic mechanisms, we show a tight approximation ratio of $3$. The upper bound follows by a mechanism which always places the two facilities at the leftmost and rightmost candidate locations. It decides which facility goes where by computing, for each facility, which of these two locations the majority of agents affected by the facility prefer, and then gives priority to the facility that has the strongest majority. See Section~\ref{sec:general:deterministic}. We finally reveal a dichotomy between deterministic and randomized mechanisms by showing an improved bound of $2$ using a rather simple randomized mechanism which equiprobably places every facility at each of the leftmost and rightmost locations. 

\subsection{Related Work}
As already mentioned above, \citet{procaccia09approximate} were the first to design strategyproof mechanisms that are approximately efficient for truthful facility location problems with desirable facilities. In particular, they studied the case of a single facility, the case of two facilities, and the case of agents that might have multiple positions. All these variants and many more have been studied further in the literature; see \citep{fl-survey} for a detailed exposition of the many different models and techniques used over the years. We discuss here some papers that are related to some aspects of our model, but are different in at least one. 

In our model, when the agents are affected by both facilities, we assume that the utility they derive is equal to the {\em sum} of distances from both facilities. This is known as the {\em heterogeneous} model and has also been considered in several previous works for the case of desirable facilities~\citep{serafino2016,Xu2021minimum,kanellopoulos2023discrete,kanellopoulos2024} as well as for obnoxious ones~\citep{GaiLW22,ZhaoLNF24,Wu2024homogeneous}. Models in which the cost or utility of an agent is determined by the closest or the farthest facility have also been considered~\citep{Lu2010two-facility,fotakis2014two,chen2020max,li2020constant,DLV24}. 

Another aspect of our model is that the agents might have different preferences over the facilities. The optional preferences we consider in particular, have also been studied in other models, mainly for desirable facilities~\citep{serafino2016,kanellopoulos2023discrete,chen2020max,li2020constant}. To the best of our knowledge, the only other paper that has considered such preferences for obnoxious facilities is that of \citet{LiPS23} who, however, focused on the case where the utility of an agent is her distance to the closest facility among those that affect her, in contrast to our model, where the utility is the sum. Other types of preferences over the facilities have also been studied, such as {\em fractional}~\citep{fong2018fractional} and {\em hybrid}~\citep{feigenbaum2015hybrid}. 

A final important aspect of our model is that it is {\em discrete} in the sense that the facilities can only be placed at specific given locations rather than anywhere on the line, which was the case in the original paper of \citet{procaccia09approximate} and follow-up work. This assumption has also been made in several works on desirable facility location models~\citep{serafino2016,kanellopoulos2023discrete,kanellopoulos2024}, as well as for obnoxious~\citep{chengYZ13,GaiLW22,ZhaoLNF24} (which, however, do not combine this with optional preferences). Another related work is that of \citet{Wu2024homogeneous} who considered an obnoxious non-optional facility location problem with a minimum distance requirement between the facilities; note that this differs from our model in which we have a set of fixed candidate locations.

\section{Model} \label{sec:model}
We consider an obnoxious facility location problem in which there is a set of $n$ {\em agents} and two {\em facilities} $F_1$ and $F_2$. Each agent $i$ has a private {\em position} $x_i$ on the line of real numbers and a publicly known {\em optional preference} $p_i = (p_{i,1},p_{i,2})$ over the two facilities indicating whether $i$ is affected by facility $F_j$ ($p_{i,j} = 1$) or not ($p_{i,j} = 0$) for each $j \in [2]$; let $\bx = (x_i)_i$ be the {\em position profile} and $\bp = (p_i)_i$ be the {\em preference profile}. The facilities can be placed at locations chosen from a multiset $A$ of {\em candidate locations}. We denote by $I = (\bx, \bp, A)$ an instance of this problem; when the instance is clear from context in our proofs, we will drop it from the notation below.

Given a {\em solution} $\by = (y_1,y_2)$ that is a pair of locations from $A$ for the two facilities ($y_j$ for $F_j$), each agent $i$ derives a {\em utility} $u_i(\by)$ defined as the {\em total distance} from the facilities that affect $i$:
\begin{align*}
    u_i(\by | I) = \sum_{j \in [2]} p_{i,j} \cdot d(i,y_j),
\end{align*}
where $d(i,y_j) = |x_i-y_j|$ is the distance between the position $x_i$ of $i$ and the location $y_j$ of $F_j$ according to $\by$. 
The {\em social welfare} of a solution $\by$ is the {\em total} utility of the agents:
\begin{align*}
    \SW(\by | I) = \sum_i u_i(\by).
\end{align*}
A solution $\by$ can also be {\em randomized} if it is a {\em probability distribution} over (deterministic) solutions. Then, we are interested in the {\em expected utility} $\mathbb{E}[u_i(\by|I)]$ of each agent $i$ and the {\em expected social welfare} $\mathbb{E}[\SW(\by|I)]$ of the solution $\by$.  

A {\em mechanism} $M$ takes as input the positions reported by the agents and, using also the available information about the preferences of the agents, computes a solution deciding where to place the two facilities; let $M(I)$ be the solution computed by $M$ when the instance formed by the input provided to it is $I$. We aim to design mechanisms that choose solutions with high (expected) social welfare and at the same time incentivize the agents to truthfully report their private positions. 

For a position profile $\bx$, let $(x_i',\bx_{-i})$ denote the position profile that is the same as $\bx$ with the only difference that the position of agent $i$ has been changed to $x_i'$. A deterministic mechanism is said to be {\em strategyproof} if no agent $i$ can increase her utility by misreporting her true position. Formally, for any two instances $I = (\bx, \bp, A)$ and $J = ((x_i', \bx_{-i}), \bp,A)$ which differ only on the position of $i$, it holds that
$u_i(M(I)|I) \geq u_i(M(J)|I)$. Similarly, a randomized mechanism is said to be {\em strategyproof in expectation} (or, simply, strategyproof) if no agent $i$ can increase her expected utility by misreporting her true position. All randomized mechanisms we consider in this paper are actually {\em universally strategyproof} as they are probability distributions over deterministic strategyproof mechanisms. 

The {\em approximation ratio} of a mechanism $M$ is the worst-case ratio over all possible instances between the maximum possible social welfare (achieved by any solution) and the (expected) social welfare of the (randomized) solution computed by $M$, that is,
\begin{align*}
    \sup_{I} \frac{\max_{\by \in A^2} \SW(\by|I) }{ \mathbb{E}[\SW(M(I)|I)] }.
\end{align*}
So, our goal is to design strategyproof mechanisms with an as small approximation ratio as possible. 

Before we conclude this section, we introduce some further notation and terminology that will be useful later in the paper. 
Given an instance, we will denote by $N_j$ the agents that are affected by facility $F_j$ for each $j \in [2]$; observe that an agent might belong to both $N_1$ and $N_2$ in case she is affected by both facilities, or to just one of those sets. We will pay particular attention to instances in which all agents are in $N_1 \cap N_2$; these instances correspond to the obnoxious heterogeneous truthful two-facility location problem with {\em non-optional preferences} that has been studied by~\citet{GaiLW22} and \citet{ZhaoLNF24}. 

Most of our mechanisms will compute solutions that involve the leftmost candidate location $L$ and the rightmost location $R$. In some cases, we will also consider the second leftmost location $\ell$, as well as the second rightmost location $r$; note that $\ell$ and $r$ might not exist in case $A$ consists of just two locations, or it might be the case that $\ell = r$ if there are three locations. We say that an agent $i$ {\em prefers} a location $x$ over another location $y$ in case $i$ is farther to $x$ than to $y$, i.e., $d(i,x) \geq d(i,y)$. Given this, we denote by $t(i)$ and $s(i)$ the most-preferred (that is, the most distant) location of $i$ and the second most-preferred (that is, the second most distant) location of agent $i$, respectively. Finally, we will also use the {\em triangle inequality}, which holds for any metric space (and thus also for the line of real numbers where agents and candidate locations are in our setting) and states that $d(x,y) \leq d(x,z) + d(z,y)$ for any points $x$, $y$ and $z$. 

\section{Non-Optional Preferences} \label{sec:doubleton}
We start with instances in which all agents are affected by both facilities, i.e., the agents have non-optional preferences. For such instances, we first show a tight bound of $\sqrt{3} \approx 1.73$ on the approximation ratio of deterministic strategyproof mechanisms, thus closing the gap between the bounds of $3/2$ and $2$ that were previously shown by \citet{GaiLW22} and \citet{ZhaoLNF24}. We also show that the approximation ratio of randomized mechanisms is between $3/2$ and $(5+4\sqrt{2})/7 \approx 1.522$. 

\subsection{Deterministic Mechanisms}\label{sec:doubleton:deterministic}
We consider the $\alphaMech$ mechanism which, for a parameter $\alpha \in [0,1/2]$, considers the distances of the $\alpha n$-th and the $(1-\alpha)n$-th leftmost agents to $L$ and $R$. If they both agree that $L$ is the most distant location, then the facilities are placed at $L$ and the available location farthest from the $\alpha n$-th leftmost agent. Similarly, if $R$ is the most distant, the facilities are placed at $R$ and the available location farthest from the $(1-\alpha)n$-th leftmost agent. Otherwise, if they disagree on which among $L$ and $R$ is the most distant, the facilities are placed at $L$ and $R$; see Mechanism~\ref{mech:sc-alpha}.

\SetCommentSty{mycommfont}
\begin{algorithm}[h]
\SetNoFillComment
\caption{$\alphaMech$}
\label{mech:sc-alpha}
{\bf Input:} Reported positions of agents that are affected by both facilities\;
{\bf Output:} Facility locations $\bw = (w_1,w_2)$\;
$i \gets \alpha n$-leftmost agent\;
$j \gets (1-\alpha)n$-leftmost agent\;
\tcp*[h]{{\bf (case 1.1)} $i$ and $j$ agree that $L$ is the most distant location} \\
\uIf{ $t(i) = t(j) = L$}{
    $w_1 \gets L$\;
    $w_2 \gets s(i)$\;
}
\tcp*[h]{{\bf (case 1.2)} $i$ and $j$ agree that $R$ is the most distant location} \\
\uElseIf{ $t(i) = t(j) = R$ }{
    $w_1 \gets R$\;
    $w_2 \gets s(j)$\;
}
\tcp*[h]{{\bf (case 2)} $i$ prefers $R$ and $j$ prefers $L$} \\
\Else{
    $w_1 \gets L$\;
    $w_2 \gets R$\;
}
\end{algorithm}

\begin{theorem}\label{thm:alpha-sp}
When all agents are affected by both facilities, $\alphaMech$ is strategyproof.
\end{theorem}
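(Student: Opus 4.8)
The plan is to show that no agent $i$ can gain by misreporting her position, by analyzing how the mechanism's output depends on the report of any single agent. The key structural observation is that the mechanism's decision is driven almost entirely by the two ``pivotal'' agents---the $\alpha n$-th and $(1-\alpha)n$-th leftmost---and by their most- and second-most-distant candidate locations. Since an agent wants to be \emph{far} from the facilities, and all agents are affected by \emph{both} facilities, the relevant quantity for agent at position $x$ is the total distance to the two placed locations. I would first record the elementary monotonicity facts: for an agent strictly to the left of the midpoint of $L$ and $R$, the most distant of $\{L,R\}$ is $R$, and symmetrically for an agent strictly to the right it is $L$; and that $L$ and $R$ are the extreme candidate locations, so for any agent the farthest available location overall is either $L$ or $R$.

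Next I would set up the case analysis by fixing an arbitrary agent $i$ (the potential deviator; I will rename to avoid clashing with the pivot indices) and splitting on whether $i$ is one of the two pivotal agents or not. The cleanest way to handle the non-pivotal case is to note that changing a non-pivotal agent's report can only change the outcome by altering \emph{which} agent occupies a pivotal rank, i.e.\ by changing the sorted order. The plan is to argue that such a rank change can only move a pivotal position in a direction that does not help the deviator: if the deviator reports a position that shifts the identity of the $\alpha n$-th or $(1-\alpha)n$-th agent, the new pivot's preferences can only produce a solution whose total distance for the deviator is no larger than before. This is where one must be careful, and I expect it to be the \textbf{main obstacle}: one has to verify that in each of the three output cases (1.1, 1.2, and 2), the placement chosen is ``extreme enough'' that a deviator cannot convert, say, a case-2 outcome (facilities at $L$ and $R$) into a more favorable case-1 outcome for herself.

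For the pivotal case, I would analyze separately the deviator being the left pivot $i^\star$ (rank $\alpha n$) and the right pivot $j^\star$ (rank $(1-\alpha)n$), using symmetry to halve the work. The heart of the argument is that the second facility is placed at $s(i^\star)$ or $s(j^\star)$, i.e.\ at the pivot's own second-most-distant location. I would show that a pivot already receives, in a weak sense, her ``best achievable'' pair: in case~1.1 the facilities sit at $L$ (the location the pivot certifies as farthest via $t(i)=L$) and at $s(i^\star)$, which is by definition the farthest remaining location for $i^\star$; any misreport either keeps the outcome identical (if it does not change the triggered case or the identity of the relevant locations for the pivot) or moves the mechanism into case~2 with facilities at $L,R$, and one checks the total distance cannot increase. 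The crucial subtlety is that by reporting a different position a pivot \emph{changes her own rank}, so she may cease to be pivotal; I must confirm that whichever agent then becomes pivotal yields a solution no better for the deviator. I would organize this via the standard device of comparing, for each reachable output $\bw'$, the quantity $d(x_i, w_1') + d(x_i, w_2')$ against the truthful value, exploiting that $L$ and $R$ are fixed extremes and that $s(\cdot)$ is by construction the maximizer of the ``other'' distance among the remaining candidates.

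Finally I would collect the cases into the strategyproofness inequality $u_{i}(M(I)\mid I)\ge u_{i}(M(J)\mid I)$ for every deviation $J$, concluding the proof. I anticipate that the bookkeeping around rank changes induced by a misreport---together with the possibility that a deviation flips the triggered case---will be the most delicate part, and I would state and reuse a small monotonicity claim (that shifting a pivot rightward can only weaken that pivot's support for $R$ being farthest, and symmetrically) to keep the case analysis manageable.
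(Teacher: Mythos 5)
Your high-level strategy is the same as the paper's (track how a misreport can shift the positions of the two pivotal ranks, then do a case analysis over the three outputs), but as written the proposal has a genuine gap, and the one concrete structural claim you lean on is wrong. In case~1.1 the second facility is placed at $s(i)$, the \emph{left} pivot's second-most-distant location --- not at $s(j)$. So your ``heart of the argument,'' that a pivot already receives her best achievable pair, is true only for the controlling pivot $i$ (and, symmetrically, for $j$ in case~1.2). The non-controlling pivot $j$ in case~1.1 may strictly prefer $(L,\ell)$ to the chosen $(L,R)$ when $s(j)=\ell$ but $s(i)=R$, so she \emph{does} have a potential incentive to manipulate. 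What actually saves the mechanism is precisely the verification you defer as ``the main obstacle'': any misreport by $j$ (or by any agent to the right of $i$) can only move the position of the rank-$\alpha n$ agent weakly to the \emph{left}, and an agent located weakly left of $i$ still satisfies $s(\cdot)=R$, so the output $(L,\ell)$ can never be triggered. Without carrying out this monotonicity argument (and its mirror images), the proof is not done; your closing ``small monotonicity claim'' is the right tool, but it is stated, not proved, and never applied to the specific output transitions.

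The same applies to the other checks you flag but do not perform. For instance, in case~2 with output $(L,R)$, an agent weakly to the left of $i$ has $t(\cdot)=R$; her misreport either leaves the output unchanged or flips it to $(L,\ell)$, and since $\ell \in [L,R]$ one has $d(\cdot,\ell) \le d(\cdot,R)$, so the flip is weakly worse --- a two-line argument, but one that must be made for each reachable transition (including the degenerate sub-case where $x_i \ge R$ and $\ell, r$ may not exist or coincide with $R$, which the paper treats explicitly). In short: correct skeleton, same route as the paper, but the steps that constitute the actual proof are the ones you postponed, and the claim you substituted for them fails for the non-controlling pivot.
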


\begin{proof}
We first consider the case where agents $i$ and $j$ agree on which among $L$ and $R$ is the most distant location; without loss of generality, let this be $L$ (case 1.1). Recall that $\ell$ and $r$ are the second leftmost and second rightmost candidate locations, and it might be the case that $\ell = r$ as well as the case that neither of those locations exist. By definition, the solution returned by the mechanism is either $(L,R)$ in case $s(i)=R$ or $(L,\ell)$ in case $s(i) = \ell$. We switch between these two cases:
\begin{itemize}
\item 
The solution is $(L, R)$. Then, an agent would want to deviate only in case she prefers the solution $(L,\ell)$ or the solution $(r,R)$ over $(L,R)$.
If $x_i \geq R$, then, since $s(i)=R$, either $\ell$ and $r$ do not exist, or $\ell = r = R$; in both these cases, clearly no agent has an incentive to change the outcome. So, let us focus on the case where $x_i < R$.   
Agents that prefer $(L,\ell)$ over $(L, R)$ are necessarily located to the right of $i$; any misreport by such an agent can only move the location of the $\alpha n$-th agent to the left and $R$ would still be more preferred for this agent over $\ell$, thus not leading to the solution $(L,\ell)$ being chosen. Similarly, agents that prefer $(r,R)$ over $(L, R)$ are necessarily to the left of $i$; any misreport by such an agent can only move the $\alpha n$-th agent to the right and $L$ would still be the most-preferred location of this agent. 

\item 
The solution is $(L,\ell)$. Since $s(i)=\ell$, it has to be the case that $i$ is positioned weakly to the right of the midpoint between $\ell$ and $R$. Hence, $i$ and all agents to her right have no incentive to misreport as $(L,\ell)$ is the best outcome for them. If an agent to the left of $i$ misreports, then this can only lead to $i$ (and perhaps $j$ as well) being moved to the right, and thus the solution would remain the same since  it would still be the case that $i$ and $j$ agree that $L$ is the most distant location and $s(i) = \ell$.
\end{itemize}

The final case (case 2), where $i$ prefers $R$ and $j$ prefers $L$, leads to the solution $(L, R)$. 
If $i$ or any agent weakly to the left of $i$ misreport, then the solution either does not change or changes to $(L,\ell)$ which is a worse outcome for these agents as $R$ is their most-preferred location. 
Similarly, if agent $j$ and any agent weakly to the right of $j$ misreport, then the solution either does not change or changes to $(r, R)$ which is a worse outcome for them. 
Finally, observe that any agent strictly between $i$ and $j$ cannot change the outcome as any misreport would either push $i$ to the left (who would still prefer $R)$ or $j$ to the right (who would still prefer $L$).
\end{proof}

We next show an upper bound on the approximation ratio of $\alphaMech$ as a function of $\alpha$, which we will optimize later. 

\begin{theorem}\label{thm:doubleton:upper:deterministic}
When all agents are affected by both facilities, the approximation ratio of $\alphaMech$ is at most $\max\{2-\alpha, (1+\alpha)/(1-\alpha)\}$.
\end{theorem}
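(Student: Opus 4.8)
The plan is to work with the per-facility decomposition $\SW(\by) = D(y_1) + D(y_2)$ for $\by=(y_1,y_2)$, where $D(y) = \sum_i d(i,y)$ is the total distance of all agents to a single location $y$. Since $D$ is a convex function of $y$ on $[L,R]$, its restriction to the sorted candidate locations is a convex sequence, so its maximum is attained at $L$ or $R$ and its second-largest value at a neighbour of the maximizing endpoint; consequently the optimal pair uses locations from $\{L,R,\ell,r\}$ and $\opt = \max\{D(L)+D(R),\, D(L)+D(\ell),\, D(R)+D(r)\}$, with $D(\ell)\le D(L)$ whenever the first term is active and $D(r)\le D(R)$ whenever the third is. Two elementary estimates drive everything else: the triangle inequality gives the lower bound $D(p)+D(q)\ge n\,|p-q|$ for any two locations $p,q$ (each agent's two distances sum to at least $|p-q|$), while the $1$-Lipschitzness of each $d(\cdot,y)$ gives $D(p)-D(q)\le |p-q|\cdot\#\{\text{agents on the }q\text{-side of the }p,q\text{ midpoint}\}$. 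The role of the order statistics $i$ and $j$ is to turn the mechanism's branching condition into a bound on exactly these counts.

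First I would translate the cases. Writing $m=(L+R)/2$, the condition $t(i)=L$ is equivalent to $x_i\ge m$, so in the agreement case (1.1) all agents from $i$ rightward lie at $\ge m$, giving $n_L:=\#\{x<m\}\le\alpha n$; symmetrically in case 1.2. In the disagreement case (2) we instead get $n_L\ge\alpha n$ and $n_R\ge\alpha n$. I treat case 2 first, where $\mech=D(L)+D(R)\ge n(R-L)$. If $\opt=D(L)+D(\ell)$, then since at least $\alpha n$ agents sit left of $m\le(\ell+R)/2$, the Lipschitz estimate bounds the gap by $D(\ell)-D(R)\le(1-\alpha)n(R-\ell)\le(1-\alpha)n(R-L)$, and dividing by $\mech$ yields $\opt/\mech\le 1+(1-\alpha)=2-\alpha$; the term $D(R)+D(r)$ is symmetric, so case 2 costs at most $2-\alpha$.

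For the agreement case I would isolate two key inequalities from $n_L\le\alpha n$. Combining the Lipschitz estimate $D(R)-D(L)\le n_L(R-L)\le\alpha n(R-L)$ with $n(R-L)\le D(L)+D(R)$ gives $(1-\alpha)D(R)\le(1+\alpha)D(L)$, i.e. $D(L)\ge\frac{1-\alpha}{1+\alpha}D(R)$. When the mechanism plays $s(i)=\ell$, the defining relation $d(i,\ell)\ge d(i,R)$ forces $x_i\ge(\ell+R)/2$, so fewer than $\alpha n$ agents lie left of that midpoint; the same two-line argument then yields $D(\ell)\ge\frac{1-\alpha}{1+\alpha}D(R)$. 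With $\mech=D(L)+D(\ell)$ bounded below by $\frac{2(1-\alpha)}{1+\alpha}D(R)$ and each of the three $\opt$-candidates bounded above using $D(r)\le D(R)$ together with the two inequalities just derived, every sub-case collapses to $\opt/\mech\le\frac{1+\alpha}{1-\alpha}$. The remaining subcase $s(i)=R$ gives $\mech=D(L)+D(R)$ and is handled exactly as in case 2 (bounded by $2-\alpha$, and by $1+\alpha$ when $\opt=D(R)+D(r)$). Taking the worst case over all branches gives the claimed $\max\{2-\alpha,(1+\alpha)/(1-\alpha)\}$.

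The main obstacle I anticipate is the second facility in the agreement case: unlike $\opt$, which places its second facility at the globally $D$-best remaining location, the mechanism commits to $s(i)$, the second-farthest location from the single agent $i$. Making this choice provably near-optimal is exactly what the inequality $D(\ell)\ge\frac{1-\alpha}{1+\alpha}D(R)$ accomplishes, and the delicate point is converting the \emph{single-agent} comparison $d(i,\ell)\ge d(i,R)$ into an \emph{aggregate} guarantee via the position of $i$ as an order statistic. Care is also needed to confirm that the ``top-two'' $\opt$ characterization is exhaustive (no interior or $\{\ell,r\}$ pair can beat it) and to keep track of which $\opt$-candidate is active in each branch so that the orderings $D(\ell)\le D(L)$ and $D(r)\le D(R)$ may be invoked.
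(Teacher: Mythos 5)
Your proof is correct, and while it traverses the same case tree as the paper's (the mechanism's branches crossed with the three candidate optima), its technical packaging is genuinely different. The paper first normalizes the instance by pushing all agents into $[L,R]$ (so that $\SW(L,R)=n\cdot d(L,R)$ holds with equality) and then sums per-agent triangle-inequality bounds over the groups $S$ and $N\setminus S$; you instead work with the per-facility decomposition $\SW(\by)=D(y_1)+D(y_2)$ and two aggregate estimates: the lower bound $D(p)+D(q)\ge n|p-q|$ and the Lipschitz counting bound $D(p)-D(q)\le |p-q|\cdot\#\{\text{agents on the $q$-side of the midpoint}\}$. This buys three things the paper does not have: (i) no boundary reduction is needed, since the inequality $\mech\ge n\,d(L,R)$ replaces the equality; (ii) the claim that the optimum is one of $(L,R)$, $(L,\ell)$, $(r,R)$ --- which the paper asserts without proof (``it suffices to consider\dots'') --- receives an explicit justification from convexity of $D$; and (iii) in the branch where the mechanism outputs $(L,\ell)$, you replace the paper's additive bounds ($\SW(L,\ell)\ge(1-\alpha)n\,d(L,R)$ and $\SW(r,R)\le(1+\alpha)n\,d(L,R)$) with the multiplicative inequalities $D(L)\ge\frac{1-\alpha}{1+\alpha}D(R)$ and $D(\ell)\ge\frac{1-\alpha}{1+\alpha}D(R)$, obtaining the same ratio $(1+\alpha)/(1-\alpha)$ via $\opt\le 2D(R)$. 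The combinatorial heart --- converting the order-statistic positions of $i$ and $j$, and the mechanism's conditions $t(i)=L$ and $s(i)\in\{\ell,R\}$, into counts of agents on each side of the relevant midpoints --- is identical in both proofs; indeed your Lipschitz estimate is exactly the paper's per-agent bound in aggregated form. One wording slip, which does not affect the argument: for a convex sequence, the second-largest value is attained at a neighbour of the maximizing endpoint \emph{or at the other endpoint} (consider $D$-values $10,1,2,9$); your displayed formula $\opt=\max\{D(L)+D(R),\,D(L)+D(\ell),\,D(R)+D(r)\}$ is the correct statement, and it is the one your case analysis actually uses.
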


\begin{proof}
We begin by arguing that, in any worst-case instance, there are no agents at the left of $L$ or at the right of $R$. Assume otherwise and consider such an instance. We modify it by moving all agents at the left of $L$ to $L$, and all agents at the right of $R$ to $R$. Note that the output of the mechanism does not change, as the preference ordering of these agents for the available locations remains the same. Furthermore, if either $i$ (the $\alpha n$-th leftmost agent) or $j$ (the $(1-\alpha)n$-th leftmost agent) is among the relocated agents, her preference ordering does not change as well. Observe also that the social welfare of the optimal solution and of the mechanism reduce by the same amount in the modified instance; hence, the approximation ratio cannot decrease by such a move. 

Now let $\bw$ be the solution chosen by the mechanism, and let $\bo$ be an optimal solution. 
We consider each case of the mechanism separately, starting with the case where $i$ and $j$ agree on which among $L$ and $R$ is most distant; without loss of generality, let $L$ be that location (case 1.1). If the mechanism chooses the solution $\bw=(L,R)$, since all agents are in the interval $[L,R]$ as argued above, we have that $\SW(\bw) = n\cdot d(L,R)$. It suffices to consider the cases where the optimal solution $\bo$ is either $(L, \ell)$ or $(r, R)$; otherwise, if the optimal solution is $(L,R)$, the approximation ratio would be $1$. Let $S$ be the set of the $\alpha n$ leftmost agents, and thus $N\setminus S$ is the set of the remaining $(1-\alpha)n$ agents.
\begin{itemize}
    \item Case (A): $\bo = (L,\ell)$.  Since all agents are in the interval $[L,R]$ and $i$ prefers $R$ over $\ell$, the same is true for any agent $k \in S$; so, $d(k,\ell) \leq d(k,R)$ for any $k \in S$. In addition, since $\ell \in [L,R]$, $d(\ell,R) \leq d(L,R)$. Using these and the triangle inequality, we can bound the optimal social welfare as follows:
    \begin{align*}
        \SW(\bo) 
        &= \sum_k \bigg( d(k,L) + d(k,\ell) \bigg) \\
        &= \sum_{k \in S} \bigg( d(k,L) + d(k,\ell) \bigg) + \sum_{k \in N\setminus S} \bigg( d(k,L) + d(k,\ell) \bigg) \\
        &\leq \sum_{k \in S} \bigg( d(k,L) + d(k,R) \bigg) + \sum_{k \in N \setminus S} \bigg( d(k,L) + d(k,R) + d(\ell,R) \bigg) \\
        &= |S| \cdot d(L,R) + |N \setminus S| \cdot 2 d(L,R) \\
        &= (2-\alpha)n \cdot d(L,R). 
    \end{align*}
    Hence, the approximation ratio is at most $2-\alpha$. 
    
    \item Case (B): $\bo = (r,R)$. Since all agents are in the interval $[L,R]$ and $i$ prefers $L$ over $R$, the same is true for any agent $k \in N \setminus S$; so, $d(k,R) \leq d(k,L)$ for any $k \in N\setminus S$, which also implies that $d(k,r) \leq d(k,L)$ for any $k \in N\setminus S$ since $r \in [L,R]$. In addition, $d(L,r) \leq d(L,R)$. Using these and the triangle inequality, we can bound the optimal social welfare as follows:
    \begin{align*}
        \SW(\bo) 
        &= \sum_k \bigg( d(k,r) + d(k,R) \bigg) \\
        &= \sum_{k \in S} \bigg( d(k,r) + d(k,R) \bigg) + \sum_{k \in N\setminus S} \bigg( d(k,r) + d(k,R) \bigg) \\
        &\leq \sum_{k \in S} \bigg( d(k,L) + d(L,r) + d(k,R) \bigg) + \sum_{k \in N\setminus S} \bigg( d(k,L) + d(k,R) \bigg) \\
        &\leq |S| \cdot 2 d(L,R) + |N\setminus S| \cdot d(L,R) \\
        &= (1+\alpha) n \cdot d(L,R).
    \end{align*}
    Hence, the approximation ratio is at most $1+\alpha \leq 2 -\alpha$ since $\alpha \leq 1/2$.    
\end{itemize}
If the mechanism returns the solution $\bw = (L,\ell)$ when $i$ and $j$ both prefer $L$ over $R$, then all $(1-\alpha)n$ agents of $N\setminus S$ (who are at the right of $i$) must agree with the preference of $i$. So, any agent $k \in N\setminus S$ prefers $\ell$ over $R$, and has utility $d(k,L) + d(k,\ell) \geq d(k,L) + d(k,R) = d(L,R)$, leading to a social welfare of $\SW(\bw) \geq (1-\alpha)n \cdot d(L,R)$. We now consider the possible structures of the optimal solution $\bo$. First, if $\bo = (L,\ell)$, then clearly the approximation ratio is $1$. If $\bo = (L,R)$, then, since all agents are in the interval $[L,R]$, $\SW(\bo) = n \cdot d(L,R)$, and the approximation ratio is at most $1/(1-\alpha)$. Finally, if $\bo = (r,R)$, then using the same arguments in case (B) above, we can show that $\SW(\bo) \geq (1+\alpha) n \cdot d(L,R)$, and the approximation ratio is at most $(1+\alpha)/(1-\alpha)$.

We now switch to the case where $i$ and $j$ disagree on their preference between $L$ and $R$ (case 2). Then, the mechanism returns  the solution $\bw = (L,R)$ with social welfare $\SW(\bw) = n \cdot d(L,R)$. It suffices to consider the case where the optimal solution $\bo$ is either $(L, \ell)$ or $(r, R)$; without loss of generality, suppose that $\bo = (L, \ell)$. Since all agents are in the interval $[L,R]$ and $i$ prefers $R$ over $L$, the same is true for any of the $\alpha n$ leftmost agents. So, $d(k,L) \leq d(k,R)$ for $k \in S$, which also implies that $d(k,\ell) \leq d(k,R)$ for any $k \in S$ since $\ell \in [L,R]$. In addition, $d(\ell,R) \leq d(L,R)$. 
Using these and the triangle inequality, we can now bound the optimal social welfare as follows:
\begin{align*}
    \SW(\bo) &= \sum_k \bigg( d(k,\ell) + d(k,L) \bigg) \\
    &= \sum_{k \in S} \bigg( d(k,\ell) + d(k,L) \bigg) + \sum_{k \in N\setminus S} \bigg( d(k,\ell) + d(k,L) \bigg) \\
    &\leq \sum_{k \in S} \bigg( d(k,R) + d(k,L) \bigg) + \sum_{k \in N\setminus S} \bigg( d(k,R) + d(\ell,R) + d(k,L) \bigg) \\
    &\leq |S| \cdot d(L,R) + |N\setminus S| \cdot 2 d(L,R) \\
    &= (2-\alpha)n \cdot d(L,R).
\end{align*}
So, the approximation ratio in this case is at most $2-\alpha$ as well.

We conclude that the approximation ratio is bound by above from $\max\{2-\alpha, (1+\alpha)/(1-\alpha)\}$.
\end{proof}

By setting $\alpha = 2-\sqrt{3}$ we obtain the following upper bound.

\begin{theorem}
The approximation ratio of {\text{\sc $(2-\sqrt{3})$-Statistic}} is at most $\sqrt{3}$.    
\end{theorem}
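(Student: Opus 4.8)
The plan is to invoke Theorem~\ref{thm:doubleton:upper:deterministic} directly with the specific choice $\alpha = 2 - \sqrt{3}$, reducing the statement to a short piece of algebra. First I would check that this value is admissible, namely that $\alpha = 2 - \sqrt{3}$ lies in the allowed range $[0, 1/2]$: since $\sqrt{3} \approx 1.732$, we have $2 - \sqrt{3} \approx 0.268 \in [0,1/2]$, so the mechanism {\text{\sc $(2-\sqrt{3})$-Statistic}} is a legitimate instantiation of $\alphaMech$ and inherits its guarantee.

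The heart of the argument is to show that the two terms inside the maximum $\max\{2 - \alpha,\ (1+\alpha)/(1-\alpha)\}$ coincide at this value of $\alpha$ and both equal $\sqrt{3}$. The first term is immediate: $2 - \alpha = 2 - (2 - \sqrt{3}) = \sqrt{3}$. For the second term, I would substitute $\alpha = 2 - \sqrt{3}$ to obtain $(3 - \sqrt{3})/(\sqrt{3} - 1)$, and then use the factorization $3 - \sqrt{3} = \sqrt{3}(\sqrt{3} - 1)$ so that the fraction simplifies to $\sqrt{3}$. Hence $\max\{2 - \alpha,\ (1+\alpha)/(1-\alpha)\} = \sqrt{3}$, and the claimed bound follows directly from Theorem~\ref{thm:doubleton:upper:deterministic}.

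The only genuinely instructive point — and the only place any real thought is needed — is explaining why this particular $\alpha$ is chosen, so I would include a sentence motivating it. The natural idea is to balance the two competing terms: $2 - \alpha$ is decreasing in $\alpha$ while $(1+\alpha)/(1-\alpha)$ is increasing in $\alpha$, so their maximum is minimized where the two are equal. Setting $2 - \alpha = (1+\alpha)/(1-\alpha)$ and clearing denominators yields $(2-\alpha)(1-\alpha) = 1+\alpha$, i.e.\ the quadratic $\alpha^2 - 4\alpha + 1 = 0$, whose roots are $2 \pm \sqrt{3}$; the unique root in $[0,1/2]$ is $\alpha = 2 - \sqrt{3}$. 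There is no substantive obstacle in this proof, since it is an immediate corollary of the preceding theorem; the content lies entirely in recognizing that the optimal $\alpha$ equalizes the two bounds and verifying the resulting value of $\sqrt{3}$.
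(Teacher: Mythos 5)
Your proposal is correct and is exactly the paper's argument: the paper proves this theorem simply by instantiating Theorem~\ref{thm:doubleton:upper:deterministic} at $\alpha = 2-\sqrt{3}$, where both terms of $\max\{2-\alpha,\ (1+\alpha)/(1-\alpha)\}$ equal $\sqrt{3}$. Your added remarks on admissibility ($2-\sqrt{3}\in[0,1/2]$) and on why this $\alpha$ balances the two terms (the quadratic $\alpha^2-4\alpha+1=0$) are accurate and merely make explicit what the paper leaves implicit.
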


Next, we present a tight lower bound of $\sqrt{3}$ on the approximation ratio of all deterministic strategyproof mechanisms, thus completely resolving this case. 

\begin{theorem}\label{thm:doubleton:lower:deterministic}
When all agents are affected by both facilities, the approximation ratio of any deterministic strategyproof mechanism is at least $\sqrt{3}$. 
\end{theorem}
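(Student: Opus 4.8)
The plan is to argue by contradiction: suppose some deterministic strategyproof mechanism $M$ achieves approximation ratio $\rho < \sqrt{3}$, and derive a contradiction by exhibiting a family of instances on which $M$ must fail. As in the proof of \Cref{thm:doubleton:upper:deterministic}, I would first normalize the candidate set to the three locations that matter, taking $L = 0$, $R = 1$, and a single intermediate candidate $c \in (0,1)$, and assume all agents lie in $[0,1]$ (any agent outside can be pulled to the nearer endpoint without changing either $M$'s output or the relevant comparisons). The effective decision of $M$ then reduces to choosing among the three feasible pairs $(L,c)$, $(L,R)$, and $(c,R)$, and the optimum over feasible solutions is always one of these. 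The observation guiding the construction is that $\sqrt{3}$ is exactly the balance point of the two regimes appearing in the upper bound, namely $2-\alpha = (1+\alpha)/(1-\alpha)$ at $\alpha = 2-\sqrt{3}$; the construction must therefore force both of these trade-offs to bind at once, so that no single choice of $M$ can be simultaneously good against a $(L,c)$-type optimum and against a $(c,R)$-type optimum.

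Concretely, I would build a one-parameter family of instances $\{I_\beta\}$ in which a block of agents is slid across the interval (in small discrete steps, for integrality), while the remaining agents stay clustered near $L$ and near $R$. The fraction $\beta$ of ``right-leaning'' agents is chosen so that, depending on $\beta$, the unique near-optimal solution switches between $(L,c)$, $(L,R)$, and $(c,R)$, with the losing alternatives trailing by a factor that grows toward $\sqrt{3}$ as $\beta$ approaches the critical fractions. The position $c$ of the middle candidate would be tuned together with the block position so that at these critical fractions the ratio between the best and the second-best feasible solution equals exactly $\sqrt{3}$; reassuringly, the two defining equations $2 - \beta = \sqrt{3}$ and $(1+\beta)/(1-\beta) = \sqrt{3}$ are both solved by the \emph{same} value $\beta = 2 - \sqrt{3}$, which is what makes a clean single construction plausible.

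The heart of the argument is then a strategyproofness chaining lemma. Sliding the block corresponds to a sequence of single-agent deviations, each transforming $I_\beta$ into a neighboring instance, so strategyproofness constrains how $M(I_\beta)$ may change along the sequence. In particular, I would establish a monotonicity/threshold property (of the flavor that an agent who is better off under the ``right'' solution cannot, by a unilateral move, cause $M$ to switch away from it, and symmetrically on the left). This pins $M$ down to a single threshold fraction $\beta^*$ at which it transitions between the solution favored by the left agents and the one favored by the right agents. Because the two regimes force the crossover to be bad on whichever side $\beta^*$ fails to align with, $M$ is left with ratio at least $\sqrt{3}$ on some $I_\beta$ in the family, contradicting $\rho < \sqrt{3}$.

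The main obstacle I anticipate is the bookkeeping needed to trap every mechanism simultaneously: one must rule out all tie-breaking rules and the possibility that $M$ hedges by committing to the ``safe'' middle solution $(L,R)$, which is why a genuinely three-way construction (several linked sequences of instances, rather than a single pair) is required. Making the strategyproofness deductions airtight is the delicate part — tracking the identity of the pivotal agent as the block moves, and ensuring that the second-farthest-location subtleties ($\ell = r$, or $c$ coinciding with an endpoint in limiting instances) do not open an escape route for $M$. The geometry itself is routine once the critical fraction and the value of $c$ are fixed by the two equations above.
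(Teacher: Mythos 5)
Your guiding intuition is partly right --- the critical fraction $2-\sqrt{3}$ is indeed the one that matters (the paper uses $\alpha=\sqrt{3}-1=1-(2-\sqrt{3})$ as the fraction of agents at the left end), and the proof is indeed a family of instances linked by single-agent moves with strategyproofness chaining. But your plan has a fatal setup error: normalizing the candidate set to three \emph{distinct} locations $L=0$, $c\in(0,1)$, $R=1$ destroys exactly the structure the bound needs. In a lower bound the adversary picks the candidate multiset, and the paper picks $\{0,0,2,2\}$: \emph{duplicated} endpoints, so that ``both facilities at $0$'' and ``both facilities at $2$'' are both feasible solutions. The mechanism is trapped at $(2,2)$ while the optimum of the final instance is $(0,0)$; the $(1-\alpha)n$ agents sitting at $2$ then get utility $0$ under the trap and $4$ under the optimum, and this is what drives the ratio to $(2-\alpha)/\alpha=\sqrt{3}$. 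In your family no such pair of outcomes exists for any $c$: with $c=1/2$, the constant mechanism that always outputs $(0,1)$ is strategyproof with ratio at most $3/2$, so no $\sqrt{3}$ bound is even true on that subfamily; and for general $c$, the initial ``forcing'' step (two of the three outcomes already $\sqrt{3}$-bad) requires $c\ge\sqrt{3}-1$ when all agents sit at $0$ but $c\le 2-\sqrt{3}$ when all agents sit at $1$ --- incompatible conditions, so whichever $c$ you tune, at one of the two ends the mechanism retains a safe outcome within factor $\sqrt{3}$, and the simultaneous three-way tension you describe cannot be set up. (Your preliminary claim that outside agents can be pulled to the nearer endpoint ``without changing $M$'s output'' is also an upper-bound technique; against an arbitrary mechanism you cannot assert anything about how its output changes.)

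The second gap is that your ``strategyproofness chaining lemma'' is both unproven and mis-aimed. For a single-agent move from $q$ to $p$ that changes the outcome from $X$ to $Y$, strategyproofness (applied in both directions) gives exactly two constraints: $u_q(Y)\le u_q(X)$ and $u_p(Y)\ge u_p(X)$. Hence the outcome is locked at $X$ only if every alternative $Y$ is strictly better for the mover at her old position or strictly worse at her new one; in particular, a mechanism may freely switch to a solution that the moved agents now prefer, so no threshold structure follows from strategyproofness, and ``penalizing $M$ at its transition fraction $\beta^*$'' cannot work --- at a legitimate transition nothing is violated. The paper's proof avoids thresholds entirely: the initial instance ($\alpha n$ agents at $0$, $(1-\alpha)n$ at $1+\varepsilon$) makes both $(0,0)$ and $(0,2)$ already $\sqrt{3}$-bad, forcing output $(2,2)$; then the agents at $1+\varepsilon$ are moved one by one to $2$ (for each mover, \emph{every} outcome with a facility at $0$ is strictly better than $(2,2)$, so a lie producing one would violate strategyproofness), and afterwards the agents at $0$ are moved one by one to $1-\varepsilon$ (each such mover could secure $(2,2)$, her unique best outcome, by lying back to $0$, so truthful reporting must yield it too). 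Movers never cross the midpoint, the output provably never transitions anywhere along the chain, and the final instance ($\alpha n$ agents at $1-\varepsilon$, $(1-\alpha)n$ at $2$) exhibits the $\sqrt{3}$ gap between the forced $(2,2)$ and the optimal $(0,0)$. Without this forcing-plus-locking design, and without the duplicated candidate locations that make it possible, your construction does not go through.
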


\begin{proof}
We consider instances with candidate locations $\{0,0,2,2\}$. Let $\alpha = \sqrt{3}-1$. 
Our first instance $I$ is such that there are $\alpha n$ agents at $0$ and $(1-\alpha)n$ agents at $1+\varepsilon$, where $\varepsilon >0$ is an infinitesimal\footnote{For simplicity, we assume that $n$ is large enough so that the cardinalities of the agents at $0$ and $1+\varepsilon$ are integers.}; See Figure~\ref{fig:doubleton:lower:deterministic:I}.
Clearly,
$\SW(0,0) \approx 2(1-\alpha) n$, 
$\SW(0,2) \approx 2\alpha n + 2(1-\alpha)n = 2n$,
and $\SW(2,2) \approx 4 \alpha n + 2(1-\alpha)n = 2(1+\alpha)n$.
Hence, if the mechanism chooses the solution $(0,0)$, the approximation ratio is at least 
\begin{align*}
    \frac{1 + \alpha}{1-\alpha} = \frac{\sqrt{3}}{2-\sqrt{3}} = 3 + 2\sqrt{3},
\end{align*}
whereas if it chooses the solution $(0,2)$, the approximation ratio is at least $1+\alpha = \sqrt{3}$. Hence, the mechanism must choose the optimal solution $(2,2)$ when given $I$ as input. 

We now consider a sequence of instances starting with $I$ such that the agents positioned at $1+\varepsilon$ move one by one to position $2$. Let $I_t$ be the instance in this sequence where $t$ agents have moved from $1+\varepsilon$ to $2$ and note that $I_0 = I$. For each instance $I_{t+1}$, the solution chosen by the mechanism must still be $(2,2)$ as otherwise, if some facility is placed at $0$ in $I_{t+1}$, the agent that moves from $1+\varepsilon$ in $I_t$ to $2$ in $I_{t+1}$ would misreport its position at $2$ when the instance is $I_t$, thus leading to $I_{t+1}$, to increase its utility from $2-2\varepsilon$ to at least $2$. 

Let $J$ be the last instance of the previous sequence in which there are $\alpha n$ agents at $0$ and $(1-\alpha)n$ agents at $2$; see Figure~\ref{fig:doubleton:lower:deterministic:J}. We next consider another sequence of instances starting with $J$ such that the agents positioned at $0$ move one by one to position $1-\varepsilon$. Let $J_t$ be the instance in this sequence where $t$ agents have moved from $0$ to $1-\varepsilon$ and note that $J_0 = J$. For each instance $J_{t+1}$ in this sequence, the solution chosen by the mechanism must still be $(2,2)$ as otherwise, if some facility is placed at $0$ in $J_{t+1}$, the agent that moves from $0$ in $J_t$ to $1-\varepsilon$ in $J_{t+1}$ would misreport its position as $0$, leading back to $J_t$, to increase its utility from at most $2$ to $2+2\varepsilon$. 

Let $Q$ be the last instance of this sequence in which there are $\alpha n$ agents at $1-\varepsilon$, $(1-\alpha)n$ agents at $2$, and the solution chosen by the mechanism is $(2,2)$; see Figure~\ref{fig:doubleton:lower:deterministic:Q}. Since $\SW(2,2) \approx 2\alpha n$ and $\SW(0,0) = 2\alpha n + 4(1-\alpha) n = 2(2-\alpha)n$, the approximation ratio is at least 
\begin{align*}
    \frac{2-\alpha}{\alpha} = \frac{2}{\sqrt{3}-1} - 1 = \sqrt{3}.
\end{align*}
This completes the proof.
\end{proof}

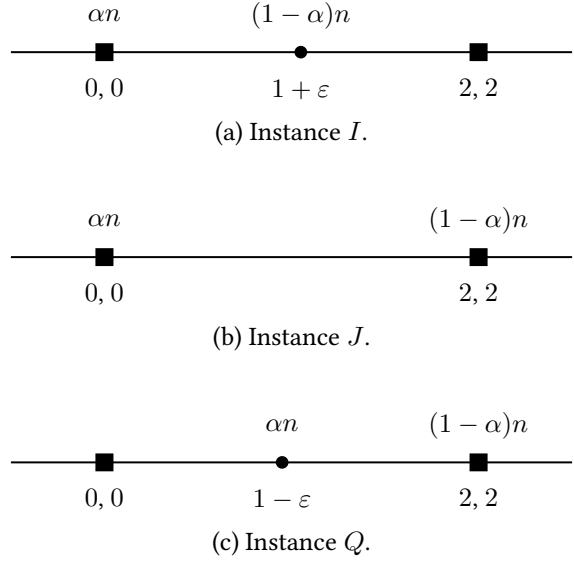
\begin{figure}[t]
\tikzset{every picture/.style={line width=0.75pt}} 
\centering
\begin{subfigure}[t]{\linewidth}
\centering
\begin{tikzpicture}[x=0.7pt,y=0.7pt,yscale=-1,xscale=1]
\draw [line width=0.75]  (0,0) -- (300,0) ;
\filldraw ([xshift=-3pt,yshift=-3pt]50,0) rectangle ++(6pt,6pt);
\filldraw ([xshift=-3pt,yshift=-3pt]250,0) rectangle ++(6pt,6pt);
\filldraw (155,0) circle (2pt);

\draw (50,-20) node [inner sep=0.75pt]  [font=\small]  {$\alpha n$};
\draw (155,-20) node [inner sep=0.75pt]  [font=\small]  {$(1-\alpha) n$};

\draw (50,20) node [inner sep=0.75pt]  [font=\small]  {$0$, $0$};
\draw (250,20) node [inner sep=0.75pt]  [font=\small]  {$2$, $2$};

\draw (155,20) node [inner sep=0.75pt]  [font=\small]  {$1+\varepsilon$};
\end{tikzpicture}
\caption{Instance $I$.}
\label{fig:doubleton:lower:deterministic:I}
\end{subfigure}
\\[20pt]
\begin{subfigure}[t]{\linewidth}
\centering
\begin{tikzpicture}[x=0.7pt,y=0.7pt,yscale=-1,xscale=1]
\draw [line width=0.75]  (0,0) -- (300,0) ;
\filldraw ([xshift=-3pt,yshift=-3pt]50,0) rectangle ++(6pt,6pt);
\filldraw ([xshift=-3pt,yshift=-3pt]250,0) rectangle ++(6pt,6pt);

\draw (50,-20) node [inner sep=0.75pt]  [font=\small]  {$\alpha n$};
\draw (250,-20) node [inner sep=0.75pt]  [font=\small]  {$(1-\alpha) n$};

\draw (50,20) node [inner sep=0.75pt]  [font=\small]  {$0$, $0$};
\draw (250,20) node [inner sep=0.75pt]  [font=\small]  {$2$, $2$};

\end{tikzpicture}
\caption{Instance $J$.}
\label{fig:doubleton:lower:deterministic:J}
\end{subfigure}
\\[20pt]
\begin{subfigure}[t]{\linewidth}
\centering
\begin{tikzpicture}[x=0.7pt,y=0.7pt,yscale=-1,xscale=1]
\draw [line width=0.75]  (0,0) -- (300,0) ;
\filldraw ([xshift=-3pt,yshift=-3pt]50,0) rectangle ++(6pt,6pt);
\filldraw ([xshift=-3pt,yshift=-3pt]250,0) rectangle ++(6pt,6pt);
\filldraw (145,0) circle (2pt);

\draw (145,-20) node [inner sep=0.75pt]  [font=\small]  {$\alpha n$};
\draw (250,-20) node [inner sep=0.75pt]  [font=\small]  {$(1-\alpha) n$};

\draw (50,20) node [inner sep=0.75pt]  [font=\small]  {$0$, $0$};
\draw (250,20) node [inner sep=0.75pt]  [font=\small]  {$2$, $2$};

\draw (145,20) node [inner sep=0.75pt]  [font=\small]  {$1-\varepsilon$};

\end{tikzpicture}
\caption{Instance $Q$.}
\label{fig:doubleton:lower:deterministic:Q}
\end{subfigure}
\caption{The instances considered in the proof of Theorem~\ref{thm:doubleton:lower:deterministic} when the solution chosen for instance $I$ is $(2,2)$. Instance $J$ is obtained by moving the $(1-\alpha)n$ agents at $1+\varepsilon$ in $I$ to $2$. Instance $Q$ is obtained by moving the $\alpha n$ agents at $0$ in $J$ to $1-\varepsilon$. In all these instances, the mechanism must choose solution $(2,2)$ due to strategyproofness, which leads to the lower bound of $\sqrt{3}$ on its approximation ratio.}
\label{fig:doubleton:lower:deterministic}
\end{figure}

\subsection{Randomized Mechanisms} \label{sec:doubleton:randomized}
We now focus on randomized strategyproof mechanisms for instances in which all agents are affected by both facilities and first show that an improved upper bound of approximately $1.522$ can be achieved by the following $\UniformAlpha$ mechanism:
Pick an integer number $k$ uniformly at random in the interval $[1, n/2]$, and run 
$\alphaMech$ for $\alpha = k/n$. This mechanism is clearly strategyproof as it is a (uniform) probability distribution over strategyproof mechanisms. We next focus on bounding its approximation ratio. 

\begin{theorem} \label{thm:doubleton:randomized:1.52}
When all agents are affected by both facilities, the approximation ratio of \hspace{0.075cm}$\UniformAlpha$ is at most $(5+4\sqrt{2})/7 \approx 1.522$, and this is tight.  
\end{theorem}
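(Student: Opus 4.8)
The plan is to fix an arbitrary instance, lower bound the expected welfare of $\UniformAlpha$, upper bound the optimal welfare, optimize the resulting ratio over the instance parameters, and then exhibit an instance attaining the bound. Exactly as in the proof of Theorem~\ref{thm:doubleton:upper:deterministic}, I would first move all agents into $[L,R]$ (this shifts the welfare of the mechanism and of the optimum by the same amount and does not change the mechanism's decisions) and normalize $d(L,R)=2$, so the midpoint is $1$. Writing $P=\sum_k d(k,L)$ and $Q=\sum_k d(k,R)$, we have $P+Q=2n$, and since $\sum_k d(k,z)$ is convex in $z$ it is maximized over $A$ at $L$ or $R$; thus each facility contributes at most $\max\{P,Q\}$ and $\opt\le 2\max\{P,Q\}$. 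On the mechanism side, let $m$ be the number of agents preferring $R$ (those weakly left of $1$); inspecting the pivots $i$ (the $k$-th leftmost) and $j$ (the $(n-k)$-th leftmost) shows that for $m<n/2$ the mechanism is in case~1.1 via case~2 (output $(L,R)$, welfare $2n$) when $k\le m$ and in case~1.1 (output $(L,s(i))$) when $k>m$, the regime $m>n/2$ being symmetric under swapping $L$ and $R$.

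The heart of the argument, and the step I expect to be the main obstacle, is to control the case~1.1 outputs $(L,s(i))$ as $k$ ranges over $(m,n/2]$. Here the pivot $i$ sits just right of the midpoint, and $s(i)$, being the second-farthest candidate from $i$, satisfies $d(i,s(i))\ge d(i,R)$ and is therefore pinned either at $R$ (giving welfare $2n$) or weakly left of $2x_i-R$, essentially at the left end. The delicate point is that the welfare of $(L,s(i))$ is \emph{not} uniformly at least $2P$: an interior second-leftmost candidate can pull $s(i)$ strictly right of $L$ and push the welfare below $2P$. What must be shown is that any such configuration simultaneously weakens the optimum, so that the ratio never beats the value obtained when $A$ carries a duplicate of $L$ (forcing $s(i)=L$, hence output $(L,L)$ of welfare $2P$) together with a duplicate of $R$ (so that $\opt=(R,R)$ has welfare $2Q$); that is, the worst instance concentrates on $A=\{L,L,R,R\}$. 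Establishing this extremality rigorously --- ruling out that some intermediate candidate set does strictly worse --- is the crux.

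Granting the reduction to $A=\{L,L,R,R\}$, case~1.1 always outputs $(L,L)$ with welfare $2P$, so with $\mu=m/n$ and $p=P/n$,
\begin{align*}
\mathbb{E}[\SW] \;\ge\; \frac{2}{n}\Big( m\cdot 2n + (n/2-m)\cdot 2P \Big) \;=\; \big(4\mu + (2-4\mu)p\big)\,n .
\end{align*}
The right-half agents lie at distance more than $1$ from $L$ and the left-half ones at distance less than $1$, so $p\in(1-\mu,2-\mu)$. For $p\le 1$ we have $\opt\le 2(2-p)n$, and the ratio increases as $p\to 1-\mu$, where it reduces to $\tfrac{2+2\mu}{2-2\mu+4\mu^2}$; maximizing over $\mu\in[0,1/2)$ gives the critical point $\mu=\sqrt2-1$ (the root of $\mu^2+2\mu-1=0$) with value $(5+4\sqrt2)/7$. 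For $p>1$ instead $\opt\le 2pn$ and a similar computation caps the ratio at $3/2$. Thus the approximation ratio is at most $(5+4\sqrt2)/7$.

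For tightness I would take $A=\{0,0,2,2\}$ with $(\sqrt2-1)n$ agents at $0$ and $(2-\sqrt2)n$ agents at $1+\varepsilon$. For $k\le(\sqrt2-1)n$ the pivot $i$ is at $0$ and disagrees with $j$, so the mechanism outputs $(0,2)$ with welfare $2n$; for the remaining $k$ both pivots prefer $L$ and, since $0$ is duplicated, $s(i)=0$, so the mechanism outputs $(0,0)$ with welfare $\approx 2(2-\sqrt2)n$. Averaging over the $n/2$ equiprobable values of $k$ gives $\mathbb{E}[\SW]\approx(16-10\sqrt2)n$, while the optimal solution $(2,2)$ has welfare $2\sqrt2\,n$, so the ratio equals $\tfrac{2\sqrt2}{16-10\sqrt2}=(5+4\sqrt2)/7$, matching the upper bound. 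The only loose ends are the rounding of $k$ near $(\sqrt2-1)n$ and $n/2$ and the vanishing $\varepsilon$, both absorbed by the large-$n$ convention already adopted in the paper.
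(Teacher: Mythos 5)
Your plan reproduces the paper's scaffolding (move all agents into $[L,R]$, split the randomization over $k$ by where the two pivots fall, optimize a two-parameter ratio, exhibit a matching instance), and the pieces you actually carry out are correct: the convexity bound $\opt\le 2\max\{P,Q\}$, the optimization yielding $(5+4\sqrt{2})/7$ at $\mu=\sqrt{2}-1$, and the tightness example (the mirror image of the paper's) with its welfare computation. But the step you yourself label the crux --- that the worst case ``concentrates on $A=\{L,L,R,R\}$'' --- is left entirely unproven, and it is not a patchable technicality: it \emph{is} the theorem, since once $A=\{L,L,R,R\}$ the mechanism's behavior trivializes. Worse, the natural way to prove such extremality, an instance-wise comparison that keeps the agents fixed and replaces $A$ by $\{L,L,R,R\}$, is simply false. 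Take $A=\{0,1,2,2\}$, put $0.4n$ agents at $1+\varepsilon$ and $0.6n$ agents at $1.6$. Every agent prefers $L=0$, so $m=0$; for $k\le 0.4n$ the pivot $i$ sits at $1+\varepsilon$, $s(i)=2$, and the output is $(0,2)$ with welfare $2n$, while for $0.4n<k\le n/2$ the pivot sits at $1.6$, $s(i)=1$, and the output is $(0,1)$ with welfare $\approx 1.72n$. Thus $\mathbb{E}[\SW(\bw)]\approx 0.8\cdot 2n+0.2\cdot 1.72n=1.944n$, whereas $\opt\ge\SW(0,2)=2n$, so the ratio exceeds $1.02$. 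Replacing $A$ by $\{0,0,2,2\}$ with the same agents makes $s(i)=0$ always, the mechanism then outputs the optimum $(0,0)$, and the ratio collapses to $1$. So passing to $\{L,L,R,R\}$ can strictly \emph{decrease} the ratio; any valid proof of your reduction must simultaneously re-optimize the agents' positions, which is tantamount to proving the general upper bound outright. (This also falsifies your side remark that $s(i)$, when it is not $R$, is ``essentially at the left end'': it need only satisfy $s(i)\le 2x_i-R$ and can lie deep inside $[L,R]$, which is exactly why the case-1.1 welfare cannot be controlled by $P$ alone.)

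The paper closes this hole without ever touching the candidate set: it partitions the agents at the thresholds $(L+r)/2$ and $(L+R)/2$ into sets of sizes $a,b,c$ (in its orientation the majority prefers $R$), observes that the output is $\{L,R\}$ for every draw of $k$ except when both pivots land in $S_<$ --- probability $\max\{0,(2a-n)/n\}$ --- in which case it is $(r,R)$, and then bounds every relevant welfare in terms of the single count $a$ and $d(L,R)$: $\SW(r,R)\ge a\cdot d(L,R)$ (each agent of $S_<$ has total distance at least $d(L,R)$ to the pair $\{r,R\}$), and, for the possible optima, $\SW(r,R)\le(n+a)\,d(L,R)$ and $\SW(L,\ell)\le(2n-a)\,d(L,R)$. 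All of these hold for arbitrary interior locations $\ell,r$, which is precisely the generality your $P$-parametrization cannot deliver; the final optimization of $\frac{2n^2-an}{2n^2-3an+2a^2}$ at $a=(2-\sqrt{2})n$ is then the same calculation you did, with $a/n$ in the role of your $1-\mu$. In short: your proposal proves the upper bound only on the subfamily $A=\{L,L,R,R\}$ and proves tightness; the general upper bound, which is the actual content of the statement, is missing, and the route you sketch for it cannot work as stated.
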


\begin{proof}
By applying the same reasoning as in the proof of Theorem~\ref{thm:doubleton:upper:deterministic}, we can assume that all agents are in the interval $[L,R]$. Without loss of generality, we further assume that the majority of the agents prefer $R$ over $L$. We now partition the agents into the following sets:
\begin{itemize}
    \item $S_<$ consists of any agent $i$ such that $x_i \leq (L+r)/2$, and thus $t(i) = R$ and $s(i)=r$;
    \item $S$ consists of any agent $i$ such that $(L+r)/2 < x_i \leq (L+R)/2$, and thus $t(i) = R$ and $s(i)=L$;
    \item $S_>$ consists of any agent $i$ such that $x_i > (L+R)/2$, and thus $t(i)=L$ (note that $s(i)$ might be any of $R$ or $\ell$; this makes no difference in the behavior of the mechanism as explained below). 
\end{itemize}
Let $a = |S_<|$, $b = |S|$, and $c = |S_>|$. Observe that $a+b+c = n$ and, by our assumption that most agents prefer $R$ over $L$, $a+b \geq n/2$. Let us now argue a bit about the outcome of $\alphaMech$ depending on the positions of the $\alpha n$-leftmost agent $i$ and the $(1-\alpha)n$-th leftmost agent $j$. Clearly, $i \not\in S_>$. 
\begin{itemize}
    \item If $i \in S_<$ and $j \in S_<$, then, since $t(i) = t(j) = R$ and $s(j)=r$, $\alphaMech$ returns $(r,R)$;
    \item If $i \in S_<$ and $j \in S \cup S_>$, then, since $t(i) = R$ and either $t(j) = R$ and $s(j) = L$ (when $j \in S$) or $t(j) = L$ (when $j \in S_>$),  $\alphaMech$ returns $(L,R)$;
    \item Similarly, if $i \in S$ and $j \in S \cup S_>$, then, again $t(i) = R$ and either $t(j) = R$ and $s(j) = L$, or $t(j) = L$, when $\alphaMech$ returns $(L,R)$.
\end{itemize}
Hence, the solution is $(L,R)$ in any case other than when $i, j \in S_<$. Since this is only possible when $a > n/2$, our mechanism works as follows: If $a > n/2$, it returns $(L,R)$ with probability $\frac{b+c}{n/2} = \frac{n-a}{n/2}$ and $(r,R)$ otherwise. If $a \leq n/2$, it always returns $(L,R)$. Next, we consider each case separately and argue about the approximation ratio of the mechanism. 

\medskip

\noindent
{\bf Case 1: $a \leq n/2$.} Since the mechanism returns the solution $(L,R)$ with probability $1$, and all agents are in the interval $[L,R]$, we have that 
$\mathbb{E}[\SW(\bw)] = n \cdot d(L,R)$. Since the approximation ratio is trivially $1$ if the optimal solution is $(L,R)$, it suffices to consider the two remaining potential optimal solutions, $(r,R)$ and $(L,\ell)$. 
For the solution $(r,R)$, by definition, we have
\begin{align*}
    \SW(r,R) = \sum_{i \in S_< \cup S \cup S_>} \bigg( d(i,r) + d(i,R) \bigg).
\end{align*}
We make the following observations about the agents in the different sets:
\begin{itemize}
    \item For any agent $i \in S_<$, since $x_i \in [L, R]$, 
    $$d(i,r) + d(i,R) \leq 2 \cdot d(L,R).$$ 
    \item For any agent $i \in S$, since $x_i \geq (L+r)/2$, 
    $$d(i,r) + d(i,R) \leq d((L+r)/2,r) + d((L+r)/2,R) = d(L,R).$$
    \item For any agent $i \in S_>$, for whom $x_i > (L+R)/2$, we consider two cases:
    \begin{itemize}
        \item $r \leq (L+R)/2$. Then, $d(i,r) + d(i,R) = d(r,R) \leq d(L,R)$.
        \item $r > (L+R)/2$. Then,  $d(i,r) + d(i,R) \leq 2 d((L+R)/2,R) = d(L,R)$.
    \end{itemize}
\end{itemize}
Using these observations along with the facts that $a \leq n/2$ and $a+b+c=n$, we obtain
\begin{align*}
    \SW(r,R) \leq (2a + b + c) \cdot d(L,R) \leq \frac{3n}{2} \cdot d(L,R) \leq \frac{3}{2} \cdot \mathbb{E}[\SW(\bw)]. 
\end{align*}
For the solution $(L,\ell)$, by the definition of the sets $S_<$, $S$ and $S_>$, we have 
\begin{align*}
    \SW(L,\ell) &= \sum_{i \in S_< \cup S \cup S_>} \bigg( d(i,L) + d(i,\ell) \bigg) \\
    &\leq \sum_{i \in S_< \cup S} \bigg( d(i,L) +d(i,R) \bigg) + 2 \cdot \sum_{i \in S_>} d(i,L) \\
    &\leq (a+b + 2c) \cdot d(L,R).
\end{align*}
Since $a+b+c = n$ and $c \leq n/2$ (due to the fact that $a+b \geq n/2$), we finally obtain
\begin{align*}
    \SW(L,\ell) \leq \frac{3n}{2} \cdot d(L,R) \leq \frac{3}{2} \cdot \mathbb{E}[\SW(\bw)].
\end{align*}
Consequently, no matter which solution is the optimal one, the approximation ratio is at most $3/2$ in the case where $a \leq n/2$.

\medskip

\noindent 
{\bf Case 2: $a > n/2$.} Recall that in this case, the mechanism returns the solution $(L,R)$ with probability $\frac{n-a}{n/2} = \frac{2(n-a)}{n}$ and the solution $(r,R)$ with the remaining probability $\frac{2a-n}{n}$. We consider each possible optimal solution separately. 
\begin{itemize}
    \item The optimal solution is $(L,R)$. Since all agents are in the interval $[L,R]$, we have that $\SW(L,R) = n \cdot d(L,R)$.
    Since 
    \begin{align*}
    \SW(r,R) \geq a \bigg( d\left(\frac{L+r}{2},r\right) + d\left(\frac{L+r}{2},R\right) \bigg) = a \cdot d(L,R),
    \end{align*}
    the expected social welfare of the solution computed by the mechanism is
    \begin{align*}
    \mathbb{E}[\SW(\bw)] 
    &= \frac{2(n-a)}{n}\cdot \SW(L,R) +\frac{2a-n}{n} \cdot \SW(r,R) \\
    &\geq \frac{2(n-a)}{n} \cdot n \cdot d(L,R) +\frac{2a-n}{n} \cdot a \cdot d(L,R) \\
    &= \frac{2n^2 - 3an + 2a^2}{n} \cdot d(L,R) \\
    &\geq \frac{7}{8}n \cdot d(L,R),
    \end{align*}
    where the last inequality follows by the fact that the function $2n^2 - 3an + 2a^2$ is minimized for $a = 3n/4$. Hence, the approximation ratio is at most $8/7$ in this case. 
 
    \item The optimal solution is $(r,R)$. As already shown above, we have that 
    $$\SW(r,R) \leq (2a+b+c) \cdot d(L,R) = (n+a) \cdot d(L,R).$$
    Since the expected social welfare is 
    \begin{align*}
    \mathbb{E}[\SW(\bw)] 
    &= \frac{2(n-a)}{n}\cdot \SW(L,R) +\frac{2a-n}{n} \cdot \SW(r,R), 
    \end{align*}
    the approximation ratio $\frac{\SW(r,R)}{\mathbb{E}[\SW(\bw)]}$ is increasing in terms of $\SW(r,R)$, and hence we can upper-bound it as follows:
    \begin{align*}
        \frac{\SW(r,R)}{\mathbb{E}[\SW(\bw)]} \leq \frac{n+a}{\frac{2(n-a)}{n}\cdot n + \frac{2a-n}{n} \cdot (n+a)} 
        = \frac{n^2 + n a}{n^2 - na + 2a^2}.
    \end{align*}
    The last expression is a decreasing function in terms of $a$ for any $a > n/2$, and thus the approximation ratio is at most $3/2$. 

    \item The optimal solution is $(L,\ell)$. As already shown above, we have that
    $$\SW(L,\ell) \leq (a + b + 2c) \cdot d(L,R) \leq (2n-a) \cdot d(L,R).$$
    and 
    $$\mathbb{E}[\SW(\bw)] \geq \frac{2n^2 - 3an + 2a^2}{n} \cdot d(L,R).$$
    Hence, the approximation ratio is 
    \begin{align*}
        \frac{\SW(L,\ell)}{\mathbb{E}[\SW(\bw)]} \leq \frac{2n^2 - an }{2n^2 - 3an + 2a^2}.
    \end{align*}
    The last expression is maximized to $\frac{5+4\sqrt{2}}{7} \approx 1.522$ for $a = \frac{2+3\sqrt{2}}{5+4\sqrt{2}}$, leading to the desired upper bound on the approximation ratio.
\end{itemize}
The proof of the upper bound is now complete. 

We finally show that the analysis is tight with the following instance:
\begin{itemize}
    \item There are two candidate locations at $0$ and two candidate locations at $2$.
    \item There are $(2-\sqrt{2})n$ agents at $1-\varepsilon$, for some infinitesimal $\varepsilon > 0$, and the remaining $(\sqrt{2}-1)n$ agents are at $2$.\footnote{We use irrational numbers for clarity of exposition; technically, we would need to use floors or ceilings, with an imperceptible loss in the approximation ratio for large enough $n$.}
\end{itemize}
We have that $\SW(0,0) \approx 2\sqrt{2} n$, $\SW(0,2) \approx 2 n$, and $\SW(2,2) \approx 2(2-\sqrt{2}) n$. So, the optimal solution is $(0,0)$. Our mechanism returns the solution $(0,2)$ with probability $\frac{(\sqrt{2}-1)n}{n/2} = 2(\sqrt{2}-1)$ and the solution $(2,2)$ with the remaining probability $3-2\sqrt{2}$. Hence, its expected social welfare is
$$2n\cdot2(\sqrt{2}-1) + 2(2-\sqrt{2})n\cdot(3-2\sqrt{2}) = (16-10\sqrt{2})n,$$
leading to an approximation ratio of $(5+4\sqrt{2})/7$.
\end{proof}

We complement the above positive result with an almost tight lower bound of $3/2$ on the approximation ratio of any randomized mechanism. 

\begin{theorem}\label{thm:doubleton:lower:randomized}
When all agents are affected by both facilities, the approximation ratio of any randomized strategyproof mechanism is at least $3/2$.
\end{theorem}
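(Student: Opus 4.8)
The plan is to reuse the candidate set $\{0,0,2,2\}$ and the strategyproofness-driven chain of instances from the deterministic lower bound (\cref{thm:doubleton:lower:deterministic}), but now to track the mechanism's entire output distribution through a single scalar. With $A=\{0,0,2,2\}$ the only outcomes, up to the irrelevant relabelling of the two facilities, are $(0,0)$, $(0,2)$ and $(2,2)$; let a strategyproof mechanism place these with probabilities $p$, $q$ and $r$ on a given instance, and set $\delta := p-r \in [-1,1]$. For an agent at $x\in[0,2]$ a direct computation gives expected utility $2 + 2\delta(x-1)$, so both the incentive constraints and the expected welfare $2n + 2\delta\sigma$, with $\sigma := \sum_i(x_i-1)$, depend on the distribution only through $\delta$ (the mass $q$ on $(0,2)$ plays no role). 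I would then set up the three endpoints exactly as in the deterministic proof but with the split fixed at one half: instance $I$ has $n/2$ agents at $0$ and $n/2$ at $1+\varepsilon$; instance $J$ has $n/2$ at $0$ and $n/2$ at $2$; instance $Q$ has $n/2$ at $1-\varepsilon$ and $n/2$ at $2$. As before, $J$ is reached from $I$ by relocating the right group one agent at a time from $1+\varepsilon$ to $2$, and $Q$ from $J$ by relocating the left group one agent at a time from $0$ to $1-\varepsilon$.

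The key step is to argue that $\delta$ stays \emph{constant} along both chains. Consider one move of the first chain, from $I_t$ to $I_{t+1}$, relocating a single agent from $1+\varepsilon$ to $2$, and write $\delta_t,\delta_{t+1}$ for the two values. Both $1+\varepsilon$ and $2$ lie strictly to the right of the midpoint $1$, so expected utility is increasing in $\delta$ at either position. Strategyproofness for the agent truthfully at $1+\varepsilon$ (who must not gain by reporting $2$, which produces $I_{t+1}$) gives $\varepsilon\delta_t \ge \varepsilon\delta_{t+1}$, i.e.\ $\delta_t \ge \delta_{t+1}$, while strategyproofness for the agent truthfully at $2$ in $I_{t+1}$ (who must not gain by reporting $1+\varepsilon$, which produces $I_t$) gives $\delta_{t+1}\ge \delta_t$. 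Hence $\delta_t=\delta_{t+1}$. The second chain is identical, except that $0$ and $1-\varepsilon$ lie to the left of $1$ so utilities are decreasing in $\delta$; the two one-sided constraints again combine to an equality. Therefore $I$, $J$ and $Q$ share a common value $\delta^*$.

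It then remains to pit the two endpoints against each other. On $I$ we have $\sigma\to -n/2$ as $\varepsilon\to0$, the optimum is $(2,2)$ with welfare $\to 3n$, and the mechanism earns $n(2-\delta^*)$, a ratio of $3/(2-\delta^*)$. On $Q$ we have $\sigma\to n/2$, the optimum is $(0,0)$ with welfare $\to 3n$, and the mechanism earns $n(2+\delta^*)$, a ratio of $3/(2+\delta^*)$. The first ratio increases in $\delta^*$ and the second decreases, so their maximum is minimized at $\delta^*=0$, where both equal $3/2$, and any other $\delta^*$ pushes one of them above $3/2$. This gives the lower bound of $3/2$. (Keeping the split fraction $\beta$ free yields a balanced ratio $1+2\beta-2\beta^2$, confirming that $\beta=1/2$ is the best choice in this construction.)

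I expect the crux to be the constancy of $\delta$ along the chains. The point is that each move keeps the relocated agent on one fixed side of the midpoint, so the forward and backward strategyproofness constraints point in opposite directions and pin $\delta$ down exactly; a move that crossed the midpoint would yield only a one-sided monotonicity constraint that a hedging randomized mechanism could exploit, which is precisely why a naive two-point family fails to force the bound. A secondary point to get right is the scalar reduction itself, namely verifying that the outcome $(0,2)$ and its probability $q$ enter neither the incentive constraints nor the welfare, so that the analysis genuinely reduces to the single parameter $\delta$.
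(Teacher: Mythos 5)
Your proof is correct, but it takes a genuinely different route from the paper's. The paper works with a single symmetric two-agent instance (agents at $1-\varepsilon$ and $1+\varepsilon$), assumes without loss of generality that the probability of $(0,0)$ is at least that of $(2,2)$ (valid by the symmetry of the instance and of the candidate multiset), and then invokes strategyproofness for just \emph{one} deviation, namely the left agent misreporting $0$. The resulting one-sided inequality on the probability differences, combined with the sign fixed by the WLOG step, already caps the mechanism's expected welfare on the deviated instance at $4$ against an optimum of $6-2\varepsilon$, giving the bound in three short paragraphs. Your proof instead transplants the chain technique from the deterministic lower bound (Theorem~\ref{thm:doubleton:lower:deterministic}): you establish the stronger structural fact that the scalar $\delta = p - r$ is \emph{exactly} invariant along any chain of single-agent moves that stay on one side of the midpoint, by pairing the forward and backward strategyproofness constraints for the moving agent, and then you pit the two endpoint instances $I$ and $Q$ against each other so that no common value $\delta^*$ can make both ratios fall below $3/2$. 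Both arguments rest on the same key reduction --- every agent's expected utility is $2+2\delta(x-1)$, so incentives and welfare depend on the output distribution only through the difference $\delta$, with the mass on $(0,2)$ irrelevant (the paper's $q$ denotes the probability of $(2,2)$, your $r$; the reduction is identical). What your route buys is a symmetry-free argument and an invariance lemma that is reusable and makes explicit \emph{why} the construction needs moves that do not cross the midpoint; what it costs is $n$ agents, three instance families, and considerably more bookkeeping than the paper's two agents and single misreport.
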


\begin{proof}
We consider instances with candidate locations $\{0,0,2,2\}$. Our first instance $I$ is such that there are two agents, one at $1-\varepsilon$ and one at $1+\varepsilon$; see Figure~\ref{fig:doubleton:lower:randomized:I}. Let $p$ be the probability with which a randomized strategyproof mechanism $M$ returns the solution $(0,0)$ and $q$ be the probability that $M$ returns $(2,2)$; hence the probability of the last possible solution $(0,2)$ is $1-p-q$ . 
Without loss of generality, suppose that $p \geq q$. 
The expected utility of agent $i$ who is positioned at $1-\varepsilon$ in $I$ is $2-2\varepsilon(p-q)$.

Now consider an instance $J$ that differs from $I$ only in that agent $i$ is now positioned at $0$ instead of $1-\varepsilon$; see Figure~\ref{fig:doubleton:lower:randomized:J}. 
Let $p', q'$ be the probabilities that outcomes $(0,0)$ and $(2,2)$ are returned by mechanism $M$ when given $J$ as input. 
Note that the optimal solution for $J$ is to place both facilities at $2$, for a social welfare of $6-2\varepsilon$, 
while the randomized solution computed by $M$ has expected social welfare $4-(2-2\varepsilon)(p' - q')$.

Since $M$ is strategyproof, agent $i$ should not have incentive to misreport her true position of $1-\varepsilon$ in $I$ as $0$, thus leading to $J$. This happens when $2-2\varepsilon (p-q) \geq 2-2\varepsilon (p'-q')$, which implies that $p - q \leq p' - q'$. Using this, as well as the assumption that $p \geq q$, we have that the expected social welfare of the mechanism when given $J$ as input is 
$4-(2-2\varepsilon)(p - q) \leq 4$, and thus the approximation ratio is at least $3/2$.
\end{proof}

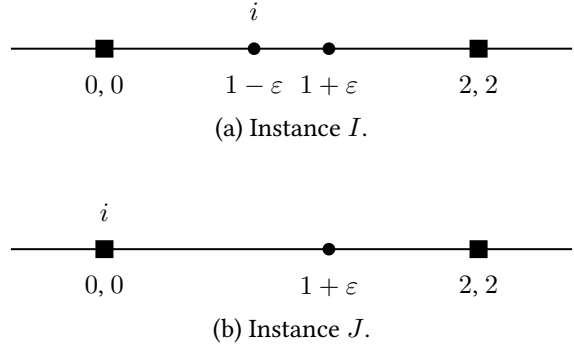
\begin{figure}[t]
\tikzset{every picture/.style={line width=0.75pt}} 
\centering
\begin{subfigure}[t]{\linewidth}
\centering
\begin{tikzpicture}[x=0.7pt,y=0.7pt,yscale=-1,xscale=1]
\draw [line width=0.75]  (0,0) -- (300,0) ;
\filldraw ([xshift=-3pt,yshift=-3pt]50,0) rectangle ++(6pt,6pt);
\filldraw ([xshift=-3pt,yshift=-3pt]250,0) rectangle ++(6pt,6pt);
\filldraw (130,0) circle (2pt);
\filldraw (170,0) circle (2pt);

\draw (130,-20) node [inner sep=0.75pt]  [font=\small]  {$i$};

\draw (50,20) node [inner sep=0.75pt]  [font=\small]  {$0$, $0$};
\draw (250,20) node [inner sep=0.75pt]  [font=\small]  {$2$, $2$};

\draw (130,20) node [inner sep=0.75pt]  [font=\small]  {$1-\varepsilon$};
\draw (170,20) node [inner sep=0.75pt]  [font=\small]  {$1+\varepsilon$};
\end{tikzpicture}
\caption{Instance $I$.}
\label{fig:doubleton:lower:randomized:I}
\end{subfigure}
\\[20pt]
\begin{subfigure}[t]{\linewidth}
\centering
\begin{tikzpicture}[x=0.7pt,y=0.7pt,yscale=-1,xscale=1]
\draw [line width=0.75]  (0,0) -- (300,0) ;
\filldraw ([xshift=-3pt,yshift=-3pt]50,0) rectangle ++(6pt,6pt);
\filldraw ([xshift=-3pt,yshift=-3pt]250,0) rectangle ++(6pt,6pt);

\filldraw (170,0) circle (2pt);

\draw (50,-20) node [inner sep=0.75pt]  [font=\small]  {$i$};

\draw (50,20) node [inner sep=0.75pt]  [font=\small]  {$0$, $0$};
\draw (250,20) node [inner sep=0.75pt]  [font=\small]  {$2$, $2$};

\draw (170,20) node [inner sep=0.75pt]  [font=\small]  {$1+\varepsilon$};
\end{tikzpicture}
\caption{Instance $J$.}
\label{fig:doubleton:lower:randomized:J}
\end{subfigure}
\caption{The instances considered in the proof of Theorem~\ref{thm:doubleton:lower:randomized}.}
\label{fig:doubleton:lower:randomized}
\end{figure}

\section{General Optional Preferences}\label{sec:general}
In this section we turn our attention to general instances which might consist of some agents that are affected by a single facility and some agents that are affected by both. We first show a tight bound of $3$ on the approximation ratio of deterministic strategyproof mechanisms, and then show that the approximation ratio of randomized mechanisms is between $3/2$ and $2$. 

\subsection{Deterministic Mechanisms} \label{sec:general:deterministic}
We first show the upper bound by considering the {\sc LR-Stronger-Majority} mechanism, which works as follows. 
For any $j \in [2]$, the mechanism defines $S_j$ to be the set consisting of the {\em majority} of agents in $N_j$ that agree on their preference over $L$ and $R$. Let $j \in [2]$ be such that the agents of $S_j$ have a stronger majority compared to the agents of $S_{3-j}$, that is, it holds that 
$$|S_j| - \frac{n_j}{2} \geq |S_{3-j}| - \frac{n_{3-j}}{2} 
\Leftrightarrow 2|S_j| - n_j \geq 2|S_{3-j}| - n_{3-j}.$$
Then, $w_j$ is set to be whichever between $L$ and $R$ the agents of $S_j$ prefer, and $w_{3-j}$ is the remaining location between $L$ and $R$.  See Mechanism~\ref{mech:LR-stronger}.

\SetCommentSty{mycommfont}
\begin{algorithm}[h]
\SetNoFillComment
\caption{\sc LR-Stronger-Majority}
\label{mech:LR-stronger}
{\bf Input:} Reported positions of agents\;
{\bf Output:} Facility locations $\bw = (w_1,w_2)$\;
\For{$j \in [2]$}{
    $S_j(L) \gets \{i \in N_j: d(i,L) \geq d(i,R)\}$\;
    $S_j(R) \gets N_j \setminus S_j(L)$\;
    \uIf{$|S_j(L)| \geq |S_j(R)|$}{
        $S_j \gets S_j(L)$\;
        $P_j \gets L$ \;
    }
    \Else{
        $S_j \gets S_j(R)$\;
        $P_j \gets R$ \;
    }
}
\uIf{$2|S_1| - n_1 \geq 2|S_2| - n_2$}{
    $w_1 \gets P_1$\;
    $w_2 \gets \{L,R\}\setminus P_1$\;
}
\Else{
    $w_2 \gets P_2$\;
    $w_1 \gets \{L,R\}\setminus P_2$\;
}

\end{algorithm}

\begin{theorem}\label{thm:general:upper:sp}
{\sc LR-Stronger-Majority} is strategyproof.
\end{theorem}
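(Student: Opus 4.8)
The plan is to exploit the fact that \textsc{LR-Stronger-Majority} always outputs one facility at $L$ and the other at $R$; the only degree of freedom in its output is which of the two facilities is sent to $L$. Thus, for a fixed instance there are exactly two possible outputs, and each agent's incentives are entirely determined by which of these two she prefers. I would first dispose of the agents in $N_1 \cap N_2$: since both outputs place the two facilities at the pair $\{L,R\}$, such an agent always obtains utility $d(i,L) + d(i,R)$, independently of the output and hence of her report, so she is indifferent and can never profit from misreporting.

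It remains to handle an agent affected by a single facility; by the symmetry between $F_1$ and $F_2$ and between $L$ and $R$, it suffices to treat $i \in N_1 \setminus N_2$ who prefers $F_1$ at $L$, i.e.\ $d(i,L) \geq d(i,R)$. Her utility is $d(i,w_1)$, so she strictly cares only about the binary event $w_1 = L$. The two structural observations I would establish are: (i) the strength of the majority for facility $j$ equals $2|S_j| - n_j = \big| |S_j(L)| - |S_j(R)| \big|$, i.e.\ the absolute margin; and (ii) because $i \notin N_2$, her report affects neither $S_2(L)$, $S_2(R)$, nor the pair $(\,|S_2(L)|-|S_2(R)|\,,\, P_2)$ for facility $2$, which therefore remains fixed. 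Writing $\delta = |S_1(L)| - |S_1(R)|$ for the \emph{signed} margin of facility $1$, any report of $i$ can only change $\delta$, and does so solely by moving her between $S_1(L)$ and $S_1(R)$: the only two effective options are being counted in $S_1(L)$ (which is what truthtelling does, since $d(i,L) \geq d(i,R)$), giving margin $\delta$, or being counted in $S_1(R)$, giving margin $\delta - 2$.

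With $m = \big||S_2(L)|-|S_2(R)|\big|$ and $P_2$ fixed, I would read off from the mechanism's rule the indicator $g(\delta)$ that the output places $F_1$ at $L$: if $|\delta| \geq m$ then $F_1$ wins priority and $w_1 = L$ exactly when $\delta \geq 0$, whereas if $|\delta| < m$ then $F_2$ wins priority and $w_1 = L$ exactly when $P_2 = R$. A short case split on $P_2$ then shows that in both cases $g$ reduces to a monotone threshold in $\delta$: for $P_2 = L$ one gets $g(\delta) = 1 \iff \delta \geq m$, and for $P_2 = R$ one gets $g(\delta) = 1$ for every $\delta$ above a threshold of order $-m$ (with the single boundary value $\delta = -m$ settled by the priority tie-break in favor of $F_1$). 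Since truthtelling yields the larger margin $\delta$ and the only available deviation yields $\delta - 2$, monotonicity gives $g(\delta) \geq g(\delta-2)$, so truthtelling is at least as likely to secure the event $w_1 = L$ that $i$ desires. The analogous statements for an agent preferring $F_1$ at $R$, and for agents in $N_2 \setminus N_1$, follow by the same argument after exchanging $L \leftrightarrow R$ or $F_1 \leftrightarrow F_2$.

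The step I expect to be the main obstacle is precisely the priority comparison hidden in observation (ii) and the definition of $g$: a single-facility agent cannot directly dictate her own facility's placement, but by shrinking or enlarging her facility's margin she can flip which facility wins priority, and one must rule out the possibility that she benefits by \emph{deliberately weakening} her own majority so that the other facility's fixed preference forces a favorable placement. The case analysis on $P_2$ is what closes this loophole, and the crux is verifying that both resulting thresholds point the same way, so that $g$ is globally nondecreasing in $\delta$ rather than merely piecewise so.
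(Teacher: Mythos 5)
Your proof is correct, and it reaches the result by a route that is organized quite differently from the paper's. Both arguments dispose of agents in $N_1 \cap N_2$ identically (every possible output occupies $\{L,R\}$, so such agents are indifferent), and both ultimately rest on the fact that a single-facility agent's report only toggles her membership between $S_j(L)$ and $S_j(R)$. But where the paper runs a four-bullet case analysis over agent types --- affected by $F_1$ or $F_2$, inside or outside the majority set $S_j$ --- and argues verbally in each case that a misreport either leaves the outcome unchanged or shifts majorities in a harmless direction, you compress everything into one structural lemma: with the other facility's margin $m$ and preferred location $P_2$ held fixed, the indicator $g(\delta)$ of the event $w_1 = L$ is a nondecreasing threshold function of the signed margin $\delta$, so a deviation, which can only move the truthful margin $\delta$ to $\delta - 2$ (or $\delta + 2$ for an agent preferring $R$), can never flip the outcome in the deviator's favor. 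This buys uniformity and rigor exactly where the paper is terse: the loophole of an agent deliberately weakening her own facility's majority so that priority passes to the other facility is closed by the same threshold computation rather than by separate ad hoc reasoning, and tie-breaks are handled explicitly. Your formulation even quietly repairs a small imprecision in the paper: when $|S_2(L)| = |S_2(R)|$, an agent in $S_2$ who misreports to the other side \emph{increases} $|S_2|$ (the paper asserts $|S_2|$ can only stay the same or decrease) while flipping $P_2$, yet $F_2$ still lands at the same location --- precisely the situation your case split on $P_2$ absorbs. The one caveat worth recording is your appeal to symmetry for agents in $N_2 \setminus N_1$: since priority ties favor $F_1$, the thresholds for $F_2$'s margin are strict where those for $F_1$ were weak (e.g., the indicator of $w_2 = L$ equals $1$ iff $\delta_2 > m_1$ when $P_1 = L$, and iff $\delta_2 \geq -m_1$ when $P_1 = R$), but monotonicity --- the only property your argument actually uses --- holds in all cases, so the proof goes through.
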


\begin{proof}
First observe that any agent that is affected by both facilities has no incentive to deviate since the two facilities are always placed at $L$ and $R$. Now, without loss of generality, suppose that $2|S_1| - n_1 \geq 2|S_2| - n_2$ and that the solution computed by the mechanism is $\bw = (L,R)$, that is, $F_1$ is placed at $L$ because the majority of the agents in $N_1$ prefer it over $R$, and $F_2$ is placed at $R$. We consider each agent depending on her preference.
\begin{itemize}
    \item Any agent that is affected only by $F_1$ that is part of $S_1$ has no incentive to affect the outcome since she prefers $L$ over $R$.
    \item Any agent that is affected only by $F_1$ that is not in $S_1$ cannot change the outcome since she can only misreport a position that would either not change or increase $|S_1|$.
    \item Any agent that is affected only by $F_2$ that is part of $S_2$ either prefers $R$ over $L$, or cannot change the outcome since she can only misreport a position that would either not change or decrease $|S_2|$.
    \item Any agent that is affected only by $F_2$ that is not part of $S_2$ either prefers $R$ over $L$, or she prefers $L$ over $R$ which means that the agents of $S_2$ prefer $R$ and increasing $|S_2|$ cannot change the outcome as $F_2$ would again be at $R$ if the agents of $N_2$ obtained the stronger majority. 
\end{itemize}
So, the mechanism is strategyproof. 
\end{proof}

\begin{theorem}\label{thm:general:upper:deterministic}
The approximation ratio of {\sc LR-Stronger-Majority} is at most $3$.
\end{theorem}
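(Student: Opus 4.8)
The plan is to compare the mechanism's welfare $\SW(\bw)$ against the optimum $\SW(\bo)$ after two preliminary simplifications. First, exactly as in the proof of Theorem~\ref{thm:doubleton:upper:deterministic}, I would argue that a worst-case instance has all agents inside $[L,R]$: relocating an agent lying outside this interval to the nearest endpoint leaves both the mechanism's output and every agent's preference between $L$ and $R$ unchanged, and it decreases $\SW(\bw)$ and $\SW(\bo)$ by the same amount (for agents in $N_1\cap N_2$ by twice that amount), so the ratio cannot drop. Second, since an agent's utility depends on $o_1$ only through $N_1$ and on $o_2$ only through $N_2$, the objective separates, and since $y\mapsto\sum_{i\in N_j}d(i,y)$ is convex it is maximized over the candidate multiset $A\subseteq[L,R]$ at an endpoint. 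Normalizing $d(L,R)=1$ and writing $D_j(X)=\sum_{i\in N_j}d(i,X)$, this gives the exact formula $\SW(\bo)=\max\{D_1(L),D_1(R)\}+\max\{D_2(L),D_2(R)\}$, while the mechanism, placing the facilities at $L$ and $R$, achieves $\SW(\bw)=D_1(w_1)+D_2(w_2)$; with all agents in $[L,R]$ I also have $D_j(L)+D_j(R)=n_j$, where $n_j=|N_j|$.

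Next I would fix notation around the mechanism's decision. By relabeling facilities assume $F_1$ carries the stronger majority, i.e.\ $m_1\ge m_2$ where $m_j:=2|S_j|-n_j\ge0$, and by the $L\leftrightarrow R$ symmetry assume the winning majority $S_1$ prefers $L$, so that $w_1=L$ and $w_2=R$. The core estimate is a single-facility bound: if a facility sits at the location its own majority prefers, its welfare is within a factor $3$ of its individual optimum. Concretely, when the majority of $N_j$ prefers $L$ (so $|N_j^L|\ge|N_j^R|$, where $N_j^L=\{i\in N_j:x_i\ge 1/2\}$ and $N_j^R$ is the rest), every $i\in N_j^L$ has $x_i\ge 1/2$, so $D_j(L)\ge\tfrac12|N_j^L|$; together with $D_j(R)-D_j(L)=\sum_{i\in N_j}(1-2x_i)\le|N_j^R|\le 2D_j(L)$ this yields $\max\{D_j(L),D_j(R)\}\le 3D_j(L)$. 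Applying this to $F_1$ I define the nonnegative slack $\sigma:=3D_1(L)-\max\{D_1(L),D_1(R)\}$, and a two-line case check (on whether $D_1(L)\ge D_1(R)$, using $\sigma=2D_1(L)$ or $\sigma=4D_1(L)-n_1=\sum_{i\in N_1}(4x_i-1)$) shows $\sigma\ge|N_1^L|-|N_1^R|=m_1$.

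It then suffices to control the possible deficit of $F_2$ at $R$. Writing $\delta:=\max\{D_2(L),D_2(R)\}-3D_2(R)$, the desired inequality is equivalent to $\delta\le\sigma$, since
\[
\SW(\bo)=\bigl(3D_1(L)-\sigma\bigr)+\bigl(3D_2(R)+\delta\bigr)=3\,\SW(\bw)+(\delta-\sigma).
\]
If $F_2$'s majority also prefers $R$, or merely if $D_2(R)\ge D_2(L)$, then $\delta\le 0\le\sigma$ and we are done. The only genuinely hard case is when the majority of $N_2$ prefers $L$ yet the mechanism is forced to place $F_2$ at $R$, so that $\delta=D_2(L)-3D_2(R)$ may be positive and must be charged to $F_1$'s slack. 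The key computation is
\[
D_2(L)-3D_2(R)=\sum_{i\in N_2}\bigl(4x_i-3\bigr),
\]
where each $i\in N_2^L$ contributes at most $1$ and each $i\in N_2^R$ at most $-1$; hence $\delta\le|N_2^L|-|N_2^R|=m_2$. Since $m_2\le m_1$ by the stronger-majority rule and $\sigma\ge m_1$, I conclude $\delta\le m_2\le m_1\le\sigma$, giving $\SW(\bo)\le 3\,\SW(\bw)$.

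I expect this last case to be the main obstacle. The difficulty is that the mechanism chooses placements by \emph{counting} agents, whereas social welfare is a \emph{sum of distances}, so the two criteria can be misaligned and an un-honored facility can have welfare arbitrarily smaller than its individual optimum. The proof must therefore avoid bounding the two facilities independently; instead it shows that whenever $F_2$ is badly placed, the very fact that $F_1$ carries an even larger majority forces enough extra welfare from $F_1$ to compensate. Quantitatively this is the chain $\delta\le m_2\le m_1\le\sigma$, which is precisely where the ``stronger majority'' tie-break is used in an essential, rather than cosmetic, way.
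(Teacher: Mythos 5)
Your proof is correct, and it takes a genuinely different route from the paper's. Both arguments ultimately hinge on the same insight---the stronger-majority tie-break $2|S_1|-n_1\ge 2|S_2|-n_2$ is what lets the badly-placed facility's loss be charged to the well-placed one---but the machinery around it differs. You first reduce to agents in $[L,R]$ (valid, by the same relocation argument as in Theorem~\ref{thm:doubleton:upper:deterministic}), normalize $d(L,R)=1$, separate the objective as $D_1+D_2$, bound the optimum by $\max\{D_1(L),D_1(R)\}+\max\{D_2(L),D_2(R)\}$ via convexity, and then run a single unified charging scheme: slack $\sigma\ge m_1$ for the honored facility, deficit $\delta\le m_2$ for the other, and the chain $\delta\le m_2\le m_1\le\sigma$; all four bounds check out (the pointwise estimates $4x_i-1\ge \pm 1$ and $4x_i-3\le \pm 1$ on the two halves of $[0,1]$ are exactly right, and your case split in the last two steps is exhaustive). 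The paper instead works with unnormalized distances and splits on where the optimal location $o_2$ sits relative to the \emph{median agent} of $N_2$: when $d(m_2,o_2)\le d(m_2,R)$ it uses a standard median-plus-triangle-inequality argument giving a factor $3$ for each facility independently, and only in the complementary case $o_2=L$ does it run a counting argument (the lower bound $\SW(\bw)\ge(|S_1|+n_2-|S_2|)\cdot d(L,R)/2$ against additive losses $(n_1-|S_1|+|S_2|)\cdot d(L,R)$), closed by the majority comparison. Your version buys uniformity: no case distinction on the optimum's structure, no median argument, and the role of the tie-break is laid bare as a comparison of margins; the paper's version avoids your preliminary reduction and normalization, and its Case 1 is a more standard, reusable technique. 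Two small remarks: your formula for $\SW(\bo)$ is stated as an exact equality, which is justified here because the paper's benchmark ranges over $\by\in A^2$ (coincident placements allowed), though only the inequality $\SW(\bo)\le\max\{D_1(L),D_1(R)\}+\max\{D_2(L),D_2(R)\}$ is actually needed; and your symbol $m_j$ for the majority margin collides with the paper's use of $m_j$ for the median agent of $N_j$, which is worth renaming to avoid confusion.
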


\begin{proof}
Suppose without loss of generality that $2|S_1| - n_1 \geq 2|S_2| - n_2$, and that the agents of $S_1$ prefer $L$ over $R$. Hence, the solution computed by the mechanism is $\bw = (L,R)$. Let $\bo = (o_1,o_2)$ be an optimal solution. Let $m_1$ and $m_2$ be the median agents of $N_1$ and $N_2$, and observe that $m_1 \in S_1$ and $m_2 \in S_2$ by definition. 
We consider the following two cases:

\medskip
\noindent
{\bf Case 1: $d(m_2,o_2) \leq d(m_2,R) = d(m_2,w_2)$.}
Since $m_1$ prefers $L$ over $R$, $w_1=L$ is the most distant location from $m_1$ out of all candidate locations, and thus $d(m_1,w_1) = d(m_1,L) \geq d(m_1,o_1)$. In addition, the median agent $m_j$ minimizes the total distance of the agents in $N_j$ for any $j \in [2]$. 
Using these in combination with the triangle inequality, we have
\begin{align*}
    \SW(\bo) = \sum_{j \in [2]} \sum_{i \in N_j} d(i,o_j) 
    &\leq \sum_{j \in [2]} \sum_{i \in N_j} \bigg( d(i,m_j) + d(m_j,o_j) \bigg) \\
    &\leq \sum_{j \in [2]} \sum_{i \in N_j} \bigg( d(i,m_j) + d(m_j,w_j) \bigg) \\
    &\leq \sum_{j \in [2]} \sum_{i \in N_j} \bigg( 2d(i,m_j) + d(i,w_j) \bigg) \\ 
    &\leq 3 \cdot \sum_{j \in [2]} \sum_{i \in N_j} d(i,w_j). \\
    &= 3 \cdot \SW(\bw).
\end{align*}

\medskip
\noindent
{\bf Case 2: $d(m_2,o_2) > d(m_2,R) = d(m_2,w_2)$.}
For this to be possible, it must be the case that $o_2 = L$. We first present a lower bound on the social welfare of $\bw$. By the definition of the sets $S_1$ and $S_2$, we have that $|S_1|$ agents of $N_1$ are weakly to the right of $(L+R)/2$ and $n_2 - |S_2|$ agents of $N_2$ are weakly to the left of $(L+R)/2$. Hence
\begin{align*}
&\SW(\bw) = \sum_{i \in N_1} d(i,L) + \sum_{i \in N_2} d(i,R) \geq \bigg( |S_1| + n_2 - |S_2| \bigg) \cdot \frac{d(L,R)}{2} \\
&\Leftrightarrow d(L,R) \leq \frac{2}{|S_1| + n_2 - |S_2|} \cdot \SW(\bw).
\end{align*}

We now upper bound the contribution of the agents to the optimal social welfare, starting with the agents of $N_2$. 
By the definition of $S_2$, and using the triangle inequality, we have
\begin{align*}
    \sum_{i \in N_2} d(i,o_2) = \sum_{i \in N_2} d(i,L) &= \sum_{i \in S_2} d(i,L) + \sum_{i \not\in S_2} d(i,L) \\
    &\leq \sum_{i \in S_2} \bigg( d(i,R) + d(L,R) \bigg) + \sum_{i \not\in S_2} d(i,R) \\
    &= \sum_{i \in N_2} d(i,w_2) + |S_2| \cdot d(L,R).
\end{align*}
For the agents of $N_1$, since $o_2=L$, the optimal location $o_1$ of $F_1$ is either the location $\ell$ directly to the right of $L$ (if it exists) or $R$. 
By definition (and our assumption that $w_1 = L$), for any agent $i \in S_1$, we have that $d(i,R) \leq d(i,L)$, which further implies that $d(i,\ell) \leq d(i,L)$ since $\ell \in [L,R]$. So, $d(i,o_1) \leq d(i,L)$ for every $i \in S_1$. 
Using this together with the triangle inequality for the agents not in $S_1$, the fact that $o_1 \in [L,R]$, and $w_1 = L$, we obtain
\begin{align*}
    \sum_{i \in N_1} d(i,o_1) &= \sum_{i \in S_1} d(i,o_1) + \sum_{i \not\in S_1} d(i,o_1) \\
    &\leq \sum_{i \in S_1} d(i,L) + \sum_{i \not\in S_1} \bigg( d(i,L) + d(L,o_1) \bigg) \\
    &\leq \sum_{i \in N_1} d(i,w_1) + (n_1 - |S_1|) \cdot d(L,R). 
\end{align*}
By putting everything together, we have
\begin{align*}
    \SW(\bo) &\leq \SW(\bw) + (n_1 - |S_1|) \cdot d(L,R) + |S_2| \cdot d(L,R) \\
    &\leq \bigg( 1 + 2\cdot \frac{n_1-|S_1|+|S_2|}{|S_1|+n_2-|S_2|} \bigg) \cdot \SW(\bw) \\
    &\leq 3 \cdot \SW(\bw),
\end{align*}
where the last inequality follows since $2|S_1| - n_1 \geq 2|S_2| -n_2 \Leftrightarrow n_1 \leq 2|S_1| - 2|S_2| + n_2$, and thus
\begin{align*}
    \frac{n_1-|S_1|+|S_2|}{|S_1|+n_2-|S_2|} \leq \frac{(2|S_1| - 2|S_2| + n_2)-|S_1|+|S_2|}{|S_1|+n_2-|S_2|} = 1.
\end{align*}
The proof is now complete. 
\end{proof}

We next show a matching lower bound of $3$ using instances in which all agents are affected by the same facility; we remark that this lower bound has also been shown by \citet{GaiLW22} for the simpler setting with a single facility, and also follows by the work of \citet{chengYZ13}. 

\begin{theorem}\label{thm:general:lower:deterministic}
The approximation ratio of any deterministic strategyproof mechanism is at least $3$.
\end{theorem}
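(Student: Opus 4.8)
The plan is to prove a lower bound of $3$ on the approximation ratio of any deterministic strategyproof mechanism, using instances where all agents are affected by a single facility. Since the general setting subsumes the single-facility setting (by taking a preference profile where every agent is affected only by, say, $F_1$, and placing the two candidate locations far apart so the second facility is irrelevant to welfare), it suffices to exhibit a single-facility obnoxious instance forcing ratio $3$. I would work with a discrete candidate set, most naturally $A = \{0, 2\}$ (or $\{0,0,2,2\}$ to match the earlier constructions), so that the only meaningful choice for the active facility is between the leftmost location $L=0$ and the rightmost location $R=2$; an agent's utility is simply her distance to the chosen location.

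**The core strategyproofness argument.**

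First I would consider an instance $I$ with a single agent $i$ positioned at the midpoint $1$ (or $1\pm\varepsilon$). By symmetry or by fixing the mechanism's choice, suppose the mechanism places the facility at $R=2$, giving agent $i$ utility $1$, whereas the optimal placement is indifferent here. The key step is a monotonicity-style deviation argument analogous to the one used in Theorem~\ref{thm:doubleton:lower:deterministic}: I would slide agent $i$ toward $R$ (to positions approaching $2$) one step at a time, and argue that strategyproofness forces the mechanism to keep the facility at $R$. Indeed, if at some position the mechanism switched to $L$, the agent could have misreported her earlier position to retain the favorable $R$ outcome, contradicting strategyproofness. Carrying this slide until the agent sits arbitrarily close to $R$ yields an instance where the mechanism places the facility at $R$ but the agent is essentially at $R$, so her utility under the mechanism tends to $0$ while the optimal utility (placing at $L$) tends to $d(L,R)=2$, giving an unbounded ratio — which is too strong and signals I should instead balance a population of agents on both sides.

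**The balanced construction giving exactly $3$.**

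The step I expect to be the main obstacle — and the heart of the argument — is calibrating the agent masses so the forced bad outcome yields exactly ratio $3$ rather than something larger or smaller. I would place a large mass of agents near the center together with a smaller mass at an extreme, mirroring the trichotomy of welfares computed in Theorem~\ref{thm:doubleton:lower:deterministic}. Concretely, with candidates at $0$ and $2$, put a fraction of agents just left of center and a complementary fraction at one endpoint, then run a sequence-of-instances argument: in each instance strategyproofness (via the single-agent deviations just described, applied to the marginal moving agent) pins the mechanism's choice, and in the terminal instance the mechanism's welfare is a factor $3$ below optimal. The bound of $3$ should emerge as the worst case over the fraction parameter, exactly as $\sqrt{3}$ emerged in the two-facility case but now without the second facility to dilute the penalty. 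Since the theorem text explicitly notes this bound already follows from \citet{GaiLW22} and \citet{chengYZ13} for the single-facility obnoxious problem, I would either cite that reduction directly or reproduce the short two-instance version: an instance forcing the facility to one endpoint by strategyproofness, whose optimum places it at the opposite endpoint, with masses chosen so that $\SW(\bo)/\SW(\bw) = 3$.
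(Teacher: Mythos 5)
There is a genuine gap, and it sits exactly where you predicted the heart of the argument would be. First, your core strategyproofness step has the obnoxious preferences inverted: an agent sliding toward $R$ prefers the facility at $L$ (the farther location), so if the mechanism switches to $L$ once she crosses the midpoint, no strategyproofness constraint is violated. Concretely, with candidates $\{0,2\}$, an agent at $a\leq 1$ who receives outcome $R$ gains nothing from a report yielding $L$ (utility $a \leq 2-a$), and an agent at $b\geq 1$ who receives $L$ gains nothing from a report yielding $R$ (utility $2-b\leq b$). Indeed, the rule ``place the facility at the candidate location farthest from the single reported agent'' is strategyproof and has ratio $1$ on one-agent instances, so your forced-$R$ slide and the resulting unbounded ratio never occur; the argument is not ``too strong,'' it is invalid. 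Second, having set that tool aside, you never actually build the instance achieving $3$: ``masses chosen so that $\SW(\bo)/\SW(\bw)=3$'' states the desired conclusion rather than a construction, and deferring to the cited prior works is not a proof.

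For comparison, the paper's proof is a two-agent, two-instance argument whose deviation goes in the opposite direction from your slide. Take agents affected only by $F_1$ at $1-\varepsilon$ and $1+\varepsilon$ with candidates $\{0,2\}$, and suppose without loss of generality that the mechanism places $F_1$ at $0$. Now move the agent at $1-\varepsilon$ onto $0$. If the mechanism placed $F_1$ at $2$ in this new instance, the agent whose \emph{true} position is $1-\varepsilon$ would misreport $0$ and raise her utility from $1-\varepsilon$ to $1+\varepsilon$; hence strategyproofness pins $F_1$ at $0$. The mechanism's welfare is then roughly $1$ (the relocated agent contributes nothing), while placing $F_1$ at $2$ yields roughly $2+(1-\varepsilon)\approx 3$. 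Note that the deviation doing the work is an agent moving \emph{toward} the location she is closest to (onto the facility itself), with strategyproofness forbidding the mechanism from rewarding that move by relocating the facility away --- precisely the mirror image of the deviation you proposed; no sequence of instances, mass calibration, or optimization over a fraction parameter is needed.
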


\begin{proof}
Consider an instance $I$ with candidate locations $\{0,2\}$ and two agents that are affected only by $F_1$ and are positioned at $1-\varepsilon$ and $1+\varepsilon$, respectively, where $\varepsilon > 0$ is an infinitesimal; the instance is very similar to that of Figure~\ref{fig:doubleton:lower:randomized:I}. Suppose without loss of generality that the mechanism places $F_1$ at~$0$. 

Consider next an instance $J$ in which agent $i$ that is positioned at $1-\varepsilon$ in $I$ has been moved to $0$; again, the instance is very similar to that of Figure~\ref{fig:doubleton:lower:randomized:J}. Due to strategyproofness, $F_1$ must be placed at $0$ in $J$ as well; otherwise, if $F_1$ is placed at $2$ in $J$, agent $i$ would prefer to misreport its position as $0$ instead of $1-\varepsilon$, to increase its utility from $1-\varepsilon$ to $1+\varepsilon$. The social welfare when $F_1$ is placed at $0$ in $J$ is approximately $1$, whereas the optimal social welfare is approximately $3$ when it is placed at $2$, thus leading to a lower bound of $3$. 
\end{proof}

\subsection{Randomized Mechanisms}\label{sec:general:randomized}
We finally show that an improved upper bound of $2$ can easily be achieved by a simple randomized mechanism that we refer to as {\sc Equiprobable-LR}; note that we already have a lower bound of $3/2$ for any randomized mechanism due to Theorem~\ref{thm:doubleton:lower:randomized}. The mechanism simply chooses each of the solutions $(L,R)$ and $(R,L)$ with probability $1/2$. Clearly, this is a universally strategyproof mechanism.

\begin{theorem}\label{thm:general:upper:randomized}
The approximation ratio of {\sc Equiprobable-LR} is at most $2$.
\end{theorem}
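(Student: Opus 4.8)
The plan is to compute the expected social welfare of {\sc Equiprobable-LR} exactly and then bound the optimal social welfare by a quantity at most twice as large, comparing the two term by term.

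First I would exploit the special structure of the mechanism: both solutions in its support, $(L,R)$ and $(R,L)$, place the two facilities at the same pair of locations $\{L,R\}$, only swapping which facility goes where. Consequently, for each agent $i$ the expected utility has a clean closed form that is insensitive to the swap. Writing $D_i := d(i,L)+d(i,R)$, an agent affected by both facilities obtains utility $D_i$ in each of the two solutions (the two summands just change order), so its expected utility is exactly $D_i$; an agent affected by a single facility obtains $d(i,L)$ in one solution and $d(i,R)$ in the other, each with probability $1/2$, so its expected utility is exactly $\tfrac12 D_i$. Summing over agents,
\[
\mathbb{E}[\SW(\bw)] = \sum_{i \in N_1 \cap N_2} D_i + \tfrac12 \sum_{i \in N_1 \triangle N_2} D_i,
\]
where $N_1\triangle N_2$ denotes the set of agents affected by exactly one facility.

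Next I would bound the optimum $\bo=(o_1,o_2)$. The key (and essentially only) geometric observation is that every candidate location, and in particular each $o_j$, lies in the interval $[L,R]$ by the definition of $L$ and $R$ as the extreme candidate locations; since the map $x\mapsto d(i,x)$ is convex, its value at an interior point of $[L,R]$ is at most its value at an endpoint, which gives $d(i,o_j)\le\max\{d(i,L),d(i,R)\}\le D_i$ for every agent $i$ and each $j\in[2]$. Therefore
\[
\SW(\bo) = \sum_{j\in[2]}\sum_{i\in N_j} d(i,o_j) \le \sum_{j\in[2]}\sum_{i\in N_j} D_i = 2\sum_{i\in N_1\cap N_2} D_i + \sum_{i\in N_1\triangle N_2} D_i,
\]
where in the last step each both-facility agent is counted once in each of $N_1$ and $N_2$.

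Finally I would simply compare the two displays: the right-hand side of the second one is exactly $2\,\mathbb{E}[\SW(\bw)]$, so $\SW(\bo)\le 2\,\mathbb{E}[\SW(\bw)]$ and the approximation ratio is at most $2$. I do not anticipate a genuine obstacle, since the mechanism is deliberately simple and the argument is a term-by-term accounting; the only point requiring care is the bound $d(i,o_j)\le D_i$, which must be justified through the containment of all candidate locations in $[L,R]$ rather than through any structural property of the optimum. The factor of $2$ then emerges cleanly because a both-facility agent contributes ratio $2$ (mechanism utility $D_i$ versus optimal contribution at most $2D_i$) and a single-facility agent contributes the same ratio (mechanism utility $\tfrac12 D_i$ versus optimal contribution at most $D_i$), the two matching exactly.
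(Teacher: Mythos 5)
Your proposal is correct and follows essentially the same route as the paper: both compute the mechanism's expected welfare exactly as $\sum_{i \in N_1 \cap N_2} \left( d(i,L)+d(i,R) \right) + \frac12 \sum_{i \in N_1 \triangle N_2} \left( d(i,L)+d(i,R) \right)$ and then bound the optimum term by term against twice this quantity. The only difference is in justification, not structure: where the paper derives $d(i,o_j) \le d(i,L)+d(i,R)$ via triangle-inequality case analysis (splitting both-facility agents by position and handling single-facility agents with an averaging trick), you obtain the same per-agent bound in one stroke from the convexity of $d(i,\cdot)$ on $[L,R]$, which is a cleaner unification of the paper's cases.
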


\begin{proof}
Let $\bw$ by the randomized solution computed by the mechanism. By definition, we have
\begin{align*}
    \mathbb{E}[\SW(\bw)]  
    &= \sum_{i \in N_1 \cap N_2} \bigg( d(i,L) + d(i,R) \bigg) + \frac12 \cdot \sum_{i \not\in N_1 \cap N_2} \bigg( d(i,L) + d(i,R) \bigg)
\end{align*}
Next, we bound the social welfare of the optimal solution $\bo$. 
We first consider an agent $i \in N_1 \cap N_2$ and switch between the following two cases:
\begin{itemize}
    \item $i \in [L,R]$. Then, for every $j \in [2]$, $d(i,o_j) \leq d(L,R) = d(i,L) + d(i,R)$, and thus 
    $$u_i(\bo) = d(i,o_1) + d(i,o_2) \leq 2 \bigg( d(i,L) + d(i,R) \bigg).$$
    \item $i < L$; the case $i > R$ is symmetric. Then, for every $j \in [2]$, $d(i,o_j) = d(i,L) + d(L,o_j) \leq d(i,L) + d(i,R)$, and thus again
    $$u_i(\bo) = d(i,o_1) + d(i,o_2) \leq 2 \bigg( d(i,L) + d(i,R) \bigg).$$
\end{itemize}
Hence, 
\begin{align*}
    \sum_{i \in N_1 \cap N_2} u_i(\bo) \leq 2 \cdot \sum_{i \in N_1 \cap N_2} \bigg( d(i,L) + d(i,R) \bigg).
\end{align*}
Next consider an agent $i \in N_1 \setminus N_2$. Since $o_1 \in [L,R]$, using the triangle inequality, we obtain
\begin{align*}
    u_i(\bo) &= d(i,o_1) = \frac12 d(i,o_1) + \frac12 d(i,o_1) \\
    &\leq \frac12 \bigg( d(i,L) + d(L,o_1) \bigg) + \frac12 \bigg( d(i,R) + d(R,o_1) \bigg) \\
    &\leq \frac12 \bigg( d(i,L) + d(i,R) + d(L,R) \bigg) \\
    &\leq d(i,L) + d(i,R).
\end{align*}
Hence, 
\begin{align*}
    \sum_{i \in N_1 \setminus N_2} u_i(\bo) \leq \sum_{i \in N_1 \setminus N_2} \bigg( d(i,L) + d(i,R) \bigg).
\end{align*}
Similarly, for the agents of $N_2 \setminus N_1$, we have
\begin{align*}
    \sum_{i \in N_2 \setminus N_1} u_i(\bo) \leq \sum_{i \in N_2 \setminus N_1} \bigg( d(i,L) + d(i,R) \bigg).
\end{align*}
By putting everything together, we obtain
\begin{align*}
    \SW(\bo) \leq 2 \cdot \sum_{i \in N_1 \cap N_2} \bigg( d(i,L) + d(i,R) \bigg) + \sum_{i \not\in N_1 \cap N_2} \bigg( d(i,L) + d(i,R) \bigg) \leq 2 \cdot \mathbb{E}[\SW(\bw)].  
\end{align*}
The proof is complete. 
\end{proof}

\section{Open Problems}
Our work leaves open some quite challenging questions. In particular, the gap between $1.5$ and $1.522$ for randomized mechanisms in the non-optional setting, and the gap between $1.5$ and $2$ for randomized mechanisms in the general optional setting. Beyond those, it would also be interesting to consider other objective functions besides the social welfare in our optional setting, such as the egalitarian welfare or the sum of square distances that have been studied in some previous work on obnoxious facility location models~\citep{YeMZ15,LiPS23,ZhaoLNF24}. Another interesting direction could be to consider different types of preferences over the facilities besides optional, and also to study whether improvements can be achieved when given predictions about the optimal locations of the facilities, a direction that has gained interest recently~\citep{AgrawalBGOT22,istrate2022obnoxious-predictions,XuL22,FangFLN24,christodoulou2024advice}. Finally, it might make sense to study more constrained settings in which each facility has a cost and we can only build a set of facilities that are within a given budget, similarly to the work of~\citep{deligkas2023limited,FangL24} on limited resources for desirable facilities and optional or fractional preferences.

\bibliographystyle{plainnat}
\bibliography{references}

\begin{thebibliography}{28}
\providecommand{\natexlab}[1]{#1}
\providecommand{\url}[1]{\texttt{#1}}
\expandafter\ifx\csname urlstyle\endcsname\relax
  \providecommand{\doi}[1]{doi: #1}\else
  \providecommand{\doi}{doi: \begingroup \urlstyle{rm}\Url}\fi

\bibitem[Agrawal et~al.(2022)Agrawal, Balkanski, Gkatzelis, Ou, and Tan]{AgrawalBGOT22}
Priyank Agrawal, Eric Balkanski, Vasilis Gkatzelis, Tingting Ou, and Xizhi Tan.
\newblock Learning-augmented mechanism design: Leveraging predictions for facility location.
\newblock In \emph{Proceedings of the 23rd {ACM} Conference on Economics and Computation ({EC})}, pages 497--528, 2022.

\bibitem[Chan et~al.(2021)Chan, Filos{-}Ratsikas, Li, Li, and Wang]{fl-survey}
Hau Chan, Aris Filos{-}Ratsikas, Bo~Li, Minming Li, and Chenhao Wang.
\newblock Mechanism design for facility location problems: {A} survey.
\newblock In \emph{Proceedings of the Thirtieth International Joint Conference on Artificial Intelligence {(IJCAI)}}, pages 4356--4365, 2021.

\bibitem[Chen et~al.(2020)Chen, Fong, Li, Wang, Yuan, and Zhang]{chen2020max}
Zhihuai Chen, Ken C.~K. Fong, Minming Li, Kai Wang, Hongning Yuan, and Yong Zhang.
\newblock Facility location games with optional preference.
\newblock \emph{Theoretical Computer Science}, 847:\penalty0 185--197, 2020.

\bibitem[Cheng et~al.(2013)Cheng, Yu, and Zhang]{chengYZ13}
Yukun Cheng, Wei Yu, and Guochuan Zhang.
\newblock Strategy-proof approximation mechanisms for an obnoxious facility game on networks.
\newblock \emph{Theoretical Computer Science}, 497:\penalty0 154--163, 2013.

\bibitem[Christodoulou et~al.(2024)Christodoulou, Sgouritsa, and Vlachos]{christodoulou2024advice}
George Christodoulou, Alkmini Sgouritsa, and Ioannis Vlachos.
\newblock Mechanism design augmented with output advice.
\newblock \emph{CoRR}, abs/2406.14165, 2024.

\bibitem[Deligkas et~al.(2023)Deligkas, Filos{-}Ratsikas, and Voudouris]{deligkas2023limited}
Argyrios Deligkas, Aris Filos{-}Ratsikas, and Alexandros~A. Voudouris.
\newblock Heterogeneous facility location with limited resources.
\newblock \emph{Games and Economic Behavior}, 139:\penalty0 200--215, 2023.

\bibitem[Deligkas et~al.(2024)Deligkas, Lotfi, and Voudouris]{DLV24}
Argyrios Deligkas, Mohammad Lotfi, and Alexandros~A. Voudouris.
\newblock Agent-constrained truthful facility location games.
\newblock In \emph{Proceedings of the 17th International Symposium on Algorithmic Game Theory ({SAGT})}, pages 129--146, 2024.

\bibitem[Fang and Liu(2024)]{FangL24}
Jiazhu Fang and Wenjing Liu.
\newblock Facility location games with fractional preferences and limited resources.
\newblock In \emph{Proceedings of the 23rd International Conference on Autonomous Agents and Multiagent Systems ({AAMAS})}, pages 553--561, 2024.

\bibitem[Fang et~al.(2024)Fang, Fang, Liu, and Nong]{FangFLN24}
Jiazhu Fang, Qizhi Fang, Wenjing Liu, and Qingqin Nong.
\newblock Mechanism design with predictions for facility location games with candidate locations.
\newblock In \emph{Proceedings of the 18th Annual Conference on Theory and Applications of Models of Computation ({TAMC})}, volume 14637, pages 38--49, 2024.

\bibitem[Feigenbaum and Sethuraman(2015)]{feigenbaum2015hybrid}
Itai Feigenbaum and Jay Sethuraman.
\newblock Strategyproof mechanisms for one-dimensional hybrid and obnoxious facility location models.
\newblock In \emph{{AAAI} Workshop on Incentive and Trust in E-Communities}, volume {WS-15-08}, 2015.

\bibitem[Fong et~al.(2018)Fong, Li, Lu, Todo, and Yokoo]{fong2018fractional}
Chi Kit~Ken Fong, Minming Li, Pinyan Lu, Taiki Todo, and Makoto Yokoo.
\newblock Facility location games with fractional preferences.
\newblock In \emph{Proceedings of the 32nd {AAAI} Conference on Artificial Intelligence {(AAAI)}}, pages 1039--1046, 2018.

\bibitem[Fotakis and Tzamos(2014)]{fotakis2014two}
Dimitris Fotakis and Christos Tzamos.
\newblock On the power of deterministic mechanisms for facility location games.
\newblock \emph{{ACM} Transactions on Economics and Computation}, 2\penalty0 (4):\penalty0 15:1--15:37, 2014.

\bibitem[Gai et~al.(2022)Gai, Liang, and Wang]{GaiLW22}
Ling Gai, Mengpei Liang, and Chenhao Wang.
\newblock Obnoxious facility location games with candidate locations.
\newblock In \emph{Proceedings of the 16th International Conference Algorithmic Aspects in Information and Management ({AAIM})}, volume 13513, pages 96--105, 2022.

\bibitem[Istrate and Bonchis(2022)]{istrate2022obnoxious-predictions}
Gabriel Istrate and Cosmin Bonchis.
\newblock Mechanism design with predictions for obnoxious facility location.
\newblock \emph{CoRR}, abs/2212.09521, 2022.

\bibitem[Kanellopoulos et~al.(2023)Kanellopoulos, Voudouris, and Zhang]{kanellopoulos2023discrete}
Panagiotis Kanellopoulos, Alexandros~A. Voudouris, and Rongsen Zhang.
\newblock On discrete truthful heterogeneous two-facility location.
\newblock \emph{SIAM Journal on Discrete Mathematics}, 37:\penalty0 779--799, 2023.

\bibitem[Kanellopoulos et~al.(2025)Kanellopoulos, Voudouris, and Zhang]{kanellopoulos2024}
Panagiotis Kanellopoulos, Alexandros~A. Voudouris, and Rongsen Zhang.
\newblock Truthful two-facility location with candidate locations.
\newblock \emph{Theoretical Computer Science}, 1024:\penalty0 114913, 2025.

\bibitem[Li et~al.(2023)Li, Plaxton, and Sinha]{LiPS23}
Fu~Li, C.~Gregory Plaxton, and Vaibhav~B. Sinha.
\newblock The obnoxious facility location game with dichotomous preferences.
\newblock \emph{Theoretical Computer Science}, 961:\penalty0 113930, 2023.

\bibitem[Li et~al.(2020)Li, Lu, Yao, and Zhang]{li2020constant}
Minming Li, Pinyan Lu, Yuhao Yao, and Jialin Zhang.
\newblock Strategyproof mechanism for two heterogeneous facilities with constant approximation ratio.
\newblock In \emph{Proceedings of the 29th International Joint Conference on Artificial Intelligence ({IJCAI})}, pages 238--245, 2020.

\bibitem[Lu et~al.(2010)Lu, Sun, Wang, and Zhu]{Lu2010two-facility}
Pinyan Lu, Xiaorui Sun, Yajun Wang, and Zeyuan~Allen Zhu.
\newblock Asymptotically optimal strategy-proof mechanisms for two-facility games.
\newblock In \emph{Proceedings of the 11th {ACM} Conference on Electronic Commerce ({EC})}, pages 315--324, 2010.

\bibitem[Moulin(1980)]{Moulin1980}
Herve Moulin.
\newblock On strategy-proofness and single peakedness.
\newblock \emph{Public Choice}, 35\penalty0 (4):\penalty0 437--455, 1980.

\bibitem[Procaccia and Tennenholtz(2013)]{procaccia09approximate}
Ariel~D. Procaccia and Moshe Tennenholtz.
\newblock Approximate mechanism design without money.
\newblock \emph{{ACM} Transactions on Economics and Computation}, 1\penalty0 (4):\penalty0 18:1--18:26, 2013.

\bibitem[Serafino and Ventre(2016)]{serafino2016}
Paolo Serafino and Carmine Ventre.
\newblock Heterogeneous facility location without money.
\newblock \emph{Theoretical Computer Science}, 636:\penalty0 27--46, 2016.

\bibitem[Williamson and Shmoys(2011)]{approx-book}
David~P. Williamson and David~B. Shmoys.
\newblock \emph{The Design of Approximation Algorithms}.
\newblock Cambridge University Press, 2011.

\bibitem[Wu et~al.(2024)Wu, Mei, and Zhang]{Wu2024homogeneous}
Xiaoyu Wu, Lili Mei, and Guochuan Zhang.
\newblock Two homogeneous facility location games with a minimum distance requirement on a circle.
\newblock \emph{Theoretical Computer Science}, 991:\penalty0 114398, 2024.

\bibitem[Xu and Lu(2022)]{XuL22}
Chenyang Xu and Pinyan Lu.
\newblock Mechanism design with predictions.
\newblock In \emph{Proceedings of the 31st International Joint Conference on Artificial Intelligence ({IJCAI})}, pages 571--577, 2022.

\bibitem[Xu et~al.(2021)Xu, Li, Li, and Duan]{Xu2021minimum}
Xinping Xu, Bo~Li, Minming Li, and Lingjie Duan.
\newblock Two-facility location games with minimum distance requirement.
\newblock \emph{Journal of Artificial Intelligence Research}, 70:\penalty0 719--756, 2021.

\bibitem[Ye et~al.(2015)Ye, Mei, and Zhang]{YeMZ15}
Deshi Ye, Lili Mei, and Yong Zhang.
\newblock Strategy-proof mechanism for obnoxious facility location on a line.
\newblock In \emph{Proceedings of the 21st International Conference on Computing and Combinatorics ({COCOON})}, volume 9198, pages 45--56, 2015.

\bibitem[Zhao et~al.(2024)Zhao, Liu, Nong, and Fang]{ZhaoLNF24}
Qi~Zhao, Wenjing Liu, Qingqin Nong, and Qizhi Fang.
\newblock Constrained heterogeneous two-facility location games with sum-variant.
\newblock \emph{Journal of Combinatorial Optimization}, 47\penalty0 (4):\penalty0 65, 2024.

\end{thebibliography}

\end{document}